\theoremstyle{plain}
\newenvironment{claiminproof}[1]{\medskip\par\noindent\underline{Claim:}\space#1}{}
\newenvironment{claimproof}[1]{\begin{quote}\par\noindent\emph{Proof of the Claim:}\space#1}{[\emph{End, Proof of the Claim}]\end{quote}}%\newline}
\DeclareFontFamily{U}{bigshuffle}{}
\DeclareFontShape{U}{bigshuffle}{m}{n}{
  <5-8> s*[1.7] shuffle7
  <8->  s*[1.7] shuffle10
}{}
\DeclareSymbolFont{BigShuffle}{U}{bigshuffle}{m}{n}
\DeclareMathSymbol\bigshuffle{\mathop}{BigShuffle}{"001}
\DeclareMathSymbol\bigcshuffle{\mathop}{BigShuffle}{"002}
\newcommand{\Orb}{\operatorname{Orb}}
\newcommand{\PSPACE}{\textsf{PSPACE}}
\begin{document}
%
%\title{State Complexity of Projected Languages Recognized by Permutation Automata and Commuting Letters}
%\title{State Complexity of Projected Languages of Permutation Automata}
\title{State Complexity of Projection on Languages Recognized by Permutation Automata and Commuting Letters}
\titlerunning{State Complexity of Projection on Permutation Automata}

%
%\titlerunning{Abbreviated paper title}
% If the paper title is too long for the running head, you can set
% an abbreviated paper title here
%
\author{Stefan Hoffmann\orcidID{0000-0002-7866-075X}}
\authorrunning{S. Hoffmann}
% First names are abbreviated in the running head.
% If there are more than two authors, 'et al.' is used.
%
\institute{Informatikwissenschaften, FB IV, 
  Universit\"at Trier,  Universitätsring 15, 54296~Trier, Germany, 
  \email{hoffmanns@informatik.uni-trier.de}}
\maketitle              % typeset the header of the contribution
\begin{abstract}
\begin{comment}  \todo{astz aufteilen, "this bound best possible"
extra satz oder erwähnen sharp bound.}
 In general, for a partial deterministic automaton with $n$ states,
 the projected languages to any subset of letters is recognizable
 by a deterministic automaton with $2^{n-1} + 2^{n-m} - 1$
 states, where $m$ denotes the number of unobservable transitions,
 and this bound is best possible in general.
 Here, we investigate this operation on permutation automata 
 % kürzer, für permu we derive sharp bound ... ?
 and derive the tight bound $2^{\lfloor \frac{m}{2} \rfloor + n - m} - 1$
 in this case. Furthermore, it was observed that for state-partition
 automata, the projected language is recognizable with $n$
 states, and, up to now, these were the only types of automata
 having this property. 
 We relate this property to normal subgroups in the transformation monoid
 and furthermore, give a full classification
 of when a commutative automaton is a state-partition automaton,
 for example every commutative permutation automaton.
  However, we show that projected languages for commutative
  automata are always recognizable with $n$ states.
  Hence, our result yield infinitely many additional example
  with this property that are not state-partition automata.
\end{comment}

 %For general deterministic automaton with $n$ states and a projection operator,
 %the projected language is recognizable
 %by a deterministic automaton with $3 \cdot 2^{n-3} + 2^{n-m} - 1$
 The projected language of a general deterministic automaton with $n$
 states is recognizable by a deterministic automaton with $2^{n-1} + 2^{n-m} - 1$
 states, where $m$ denotes the number of states incident to unobservable non-loop transitions, and this bound is best possible.
 Here, we derive the tight
 bound $2^{n - \lceil \frac{m}{2} \rceil} - 1$ for permutation automata.
 For a state-partition automaton with $n$ states (also called automata with the observer property) the projected
 language is recognizable with $n$ states.
 Up to now, these, and finite languages projected onto unary languages,
 were the only classes of automata known to possess this property.
 We show that this is also true for commutative automata
%  and derive a classification of when a commutative automaton
%  is a state-partition automaton.
%  This yields infinitely many new automata that are not state-partition automata but possess the property
%  that %certain (and all)
%  projected languages are recognizable with $n$-state automata.
 and we find commutative automata
 that are not state-partition automata.
 %, hence enlarging
 %the class of kno
 %with the property 
 %that projected languages are recognizable with automata
 %having at most the number of states as the input automata.
 
\keywords{state complexity \and finite automata  \and projection \and permutation automata \and state-partition automata \and commutative automata} 
\end{abstract}
%
%
%

\begin{comment}

%
% Todo, noch ausformulieren
%

\section{Shift}

$2^Q \times 2^{Q \times 2^Q}$
jedesmal, wenn $q \in F$ im vorderteil, $(q, {q_0})$ im hinterteil hinzufügen,
akzeptieren, wenn $q \in A$ für $(q, A)$ 
bei perm automat vorne immer $Q$
% oder Q^Q parallel fahren, für jeden finalzustand was starten (q, A)
%
% n^n 2^{n^2} vorne produkt Q^n x teilmengen (q_f, q_0), akzeptiere wenn (q_f, q_f) hinten
% geht doch immer so?????
%
% für permutat n^n vorderteil immer alle verschieden n! -> erreich untere schranke?

\section{The Up- and Downward Closure}

% auch mit extension problems in formal language theory motivieren dass state complexity davon anschauten.

\begin{proposition} % oder theoreme
 upward closure mit $n$
\end{proposition}
\begin{proof}
 Schleifen, d.h. in potenzmenge bleibt jedes menge drin, und man vergrößert sich um 
 mindestens eins, und falls nicht, dann permutiert dieser buchstabe.
\end{proof}

\begin{proposition}
 downward closure immer $Sigma^*$.
\end{proposition}
\begin{proof}
 beliebiges wort, bild wenn von startzustand, landet in beliebigen, dann was dranhängen, so dass final
 [also folgt, da jeder zustand co-reachable/co-accessible], sogar unter suffix und präfix ordnungen,
 also auch für diese ganzen ordnungen.
\end{proof}

% für verschiedene ordnungen, F(L), Sigma^*LSigma^* usw.
% ideal closure ..

% auch minimale elemente vergleich birget paper
\end{comment}

\section{Introduction}
\label{sec:introduction}

%
% nicht immer opeartor schreiben, bin ja nicht auf hilbert räumen hier...
% todo, (natural) projection wie in DES literatur?
%
% referenzen sortieren! todo
%
% die diss advanced combined operations aus gao et al survey zitieren?
%
% group hierarchy erwähnen?

The state complexity of a regularity-preserving operation is the minimal number of states 
needed in a recognizing automaton for the result of this operation, dependent on the size
of the input automaton.
The study of the state complexity was initiated in~\cite{Mas70}
and systematically started in~\cite{YuZhuangSalomaa1994}.
As the number of states of a recognizing automaton 
could be interpreted as the memory required to describe the recognized language
and is directly related to the runtime of algorithms employing regular languages, obtaining
state complexity bounds is a natural question with applications in verification, natural language
processing or software engineering~\cite{GaoMRY17,kohavi_jha_2009,mihov_schulz_2019,roche97,DBLP:reference/crc/2012fsbma}.

Here, in terms of state complexity, we are concerned with deterministic automata only. 
There were also investigations using nondeterministic automata~\cite{DBLP:journals/ijfcs/HolzerK09}.
However, deterministic automata have better algorithmic properties: (1) equality could be done in almost linear time~\cite{Hopcroft&Karp:2015},  (2) the minimal automaton is unique up to isomorphism~\cite{HopUll79} and (3) there is an $O(n\log n)$-time minimization
algorithm~\cite{Hopcroft:1971}. Contrary, for nondeterministic automata, equality testing is $\PSPACE$-complete~\cite{stockmeyer1973word}, 
minimal automata are not unique and minimization is a $\PSPACE$-complete problem~\cite{DBLP:journals/ijfcs/HolzerK09}.

The state complexity of the projection operation was investigated in~\cite{DBLP:journals/tcs/JiraskovaM12,Wong98}.
In~\cite{Wong98}, the tight upper bound $3 \cdot 2^{n-2} - 1$
was shown, and in~\cite{DBLP:journals/tcs/JiraskovaM12} the refined, and tight, bound $2^{n-1} + 2^{n-m} - 1$
was shown, where $m$ is related to the number of unobservable transitions for the projection operator.

The projection operator has applications in engineering, verification, fault diagnosis
and supervisory control~\cite{DBLP:books/daglib/0034521,DBLP:journals/automatica/KomendaMS12,DBLP:books/sp/13/KomendaMS13a,wonham2019}, as
it corresponds to the observable behavior, a simplified or a restricted view of a modeled system.
However, as, in general, the resulting automaton could be exponentially large, in practical applications
only those projections that avoid this blow-up are interesting.
Motivated by this, in~\cite{DBLP:conf/ifipTCS/JiraskovaM12} % todo, see [noch die referenz für anwendungen aus paper zitieren]
state-partition automata for a projection were introduced, a class of automata
for which the projection is recognizable with $n$ states, if the input automaton has $n$ states.
% noch ein bisschen zu state partition, und das diese + finite bisher einzigen mit der eigschva

Permutation automata were introduced in~\cite{DBLP:journals/mst/Thierrin68} and by McNaugthon~\cite{McNaughton67} 
in connection with the star-height problem. The languages recognized by permutation
automata are called (pure-)group languages~\cite{McNaughton67,Pin86,DBLP:reference/hfl/Pin97}.
However, one could argue that, if not viewed
as language recognizing devices, but as mere state-transition systems, sometimes also just called
semi-automata, permutation automata were around under the disguise of finite permutation groups, i.e., subgroups
of the group of all permutation on a finite set, since the beginning of the 19th century, starting
with the work of Galois, Lagrange, Jordan and others~\cite{cameron_1999,Neumann2011}.
However, certainly, the viewpoint was different.
% Permutation automata are related to finite permutation groups,  i.e., subgroups
% of the group of all permutation on a finite set, see~\cite{cameron_1999,Neumann2011}.

Languages recognized by permutation automata
 are not describable by first-order formulae
 using only the order relation~\cite{McNaughton71}
and commutative regular languages correspond to threshold and modulo counting of letters~\cite{DBLP:reference/hfl/Pin97}. 
The languages recognized by certain permutation automata, 
for example whose transformation monoids are solvable or supersoluble groups,
were described in~\cite{DBLP:journals/jalc/CartonPS09,Eilenberg1976,DBLP:conf/icalp/Therien79}.
Investigation of the state complexity of common operations on permutation automata
was initiated on last years edition of this conference~\cite{DBLP:conf/dlt/HospodarM20}.

% It is known that star-free languages are characterized by automata
% whose transitions monoids do not contain any non-trivial groups~\cite{Eilenberg1976,Pin86,DBLP:reference/hfl/Pin97}.
% In this sense, the languages recognized by permutation automata are on the other end
% of the spectrum as for them every submonoid forms a subgroup. Hence, apart from trivial instances,
% the Kleene star is essential for their description in terms of regular expressions.
% Also, it is known that star-free languages are precisely the languages captured
% by first-order formulae~\cite{DBLP:reference/hfl/Pin97}. Hence, 

% In terms of expressive power, languages recognized by permutation automata
% are as ``non''-star-free as possible, as star-freeness is characterizable
% by the absence of subgroups in the transformation monoid~\cite{Eilenberg1976,Pin86,DBLP:reference/hfl/Pin97}.
% Hence, they are not describable by first-order formulae~\cite{DBLP:reference/hfl/Pin97}. 
% The commutative regular language correspond to conditions expressible
% by modulo counting

Here, we investigate the projection operator on permutation automata.
We give a better tight bound for permutation automata, also parameterized
by the number of unobservable transitions, that, however, also grows exponentially.
We give sufficient conditions, related to normal subgroups, to yield
a state-partition permutation automaton for a given projection.
Then, we investigate projections for commuting letters, this in particular
encompasses commutative languages and automata.
We show that if we delete commuting letters by a projection operator, then
we also just need $n$ states for an $n$-state input automaton for the projected language.
In particular this applies
to commutative automata.
% Then, our results will yield a characterization of the commutative state partition automata,
% which included the commutative permutation automata, but yields
% commutative automata that are not state-partition automata.
We find commutative automata that are not state-partition automata
for a given projection.
This is in particular interesting, as in~\cite{DBLP:journals/tcs/JiraskovaM12}, it was noted that up to then, 
only state-partition automata and automata describing finite language with a unary projected
language were known to have the property that we only need $n$ states for the projected
languages.

Lastly, we derive that the projection operator preserves every variety of commutative languages.
This includes, for example, the commutative 
aperiodic, the commutative group languages or the commutative piecewise-testable languages.

% perm aut letztes dlt 
% control system motivation

%noch den kram aus csr paper und dann reicth das auch.
%
% auch überblick,  operations paper dlt 2020
% quanten automaten = perm languages
% hervorheben sprachen mit proj n die keine state-partition 
% auch hervorheben gleicher parameter unobservable transitions

%\section{Preliminaries}
%\label{sec:preliminaries}

% introductino known results

% First, we will introduce general notions of the the theory of formal langauges
% and automata. For a more detailed treatment, see~\cite{HopUll79}.
% Then, we will introduce more specific notions as orbit sets,
% the projection automaton and permutation automata.

\section{General Notions}
\label{subsec:general_notions}

By $\Sigma$ we denote a finite set of symbols, also called an \emph{alphabet}.
By $\Sigma^*$ we denote the set of all \emph{words} over $\Sigma$, i.e., finite sequences
with the concatenation operation. The \emph{empty word} is denoted by $\varepsilon$.
A \emph{language} $L$ is a subset $L \subseteq \Sigma^*$.
Languages using only a single symbol are called \emph{unary languages}.

If $X$ is a set, by $\mathcal P(X) = \{ Y \mid Y \subseteq X \}$ we denote the \emph{power set} of $X$.

If $x$ is a non-negative real number, by $\lceil x \rceil$ we denote the smallest natural number greater or equal to $x$
and by $\lfloor x \rfloor$ the largest natural number smaller or equal to $x$.

Let $\Gamma \subseteq \Sigma$. The homomorphism $\pi_{\Gamma} : \Sigma^* \to \Gamma^*$
given by $\pi_{\Gamma}(x) = x$ for $x \in \Gamma$
and $\pi_{\Gamma}(x) = \varepsilon$ for $x \in \Sigma \setminus \Gamma$
is called a \emph{projection (for $\Gamma$)}.
If $p,q \in Q$, $x \in \Sigma$, then
a transition $\delta(p, x) = q$ is said to be \emph{unobservable}
with respect to the projection $\pi_{\Gamma}$ if $x \in \Sigma \setminus \Gamma$, i.e., $\pi_{\Gamma}(x) = \varepsilon$.
Here, only non-loop 
unobservable transitions are of interest, i.e., those such that $p \ne q$.

A \emph{(partial) deterministic finite automaton (DFA)}
is denoted by a quintuple $\mathcal A =(Q, \Sigma, \delta, q_0, F)$,
where $Q$ is a \emph{finite set of states}, $\Sigma$ the \emph{input alphabet},
$\delta : Q \times \Sigma \to Q$ is a \emph{partial transition function},
$q_0$ the
\emph{start state} and $F \subseteq Q$ the set of \emph{final states}.
The DFA is said to be \emph{complete} if $\delta$ is a total function.
In the usual way, the transition function $\delta$
can be extended to a function $\hat \delta : Q \times \Sigma^* \to Q$ by setting, for $q \in Q$, $u \in \Sigma^*$
and $a \in \Sigma$, $\hat \delta(q, \varepsilon) = q$
and $\hat \delta(q, ua) = \delta(\hat \delta(q, u), a)$.
In the following, we drop the distinction between $\delta$ and $\hat \delta$
and denote both functions simply by $\delta$.

For $S \subseteq Q$ and $u \in \Sigma^*$, 
we set $\delta(S, u) = \{ \delta(s, u) \mid s \in S \mbox{ and } \delta(s, u) \mbox{ is defined} \}$.
%and, for $\Gamma \subseteq \Sigma$ and $q \in Q$,
%we set $\delta(q,\Gamma) = \{ \delta(q, x) \mid \delta(q, x) \mbox{ is defined and } x \in \Gamma\}$.

The language \emph{recognized} by $\mathcal A$
is $L(\mathcal A) = \{ u \in \Sigma^* \mid \delta(q_0, u) \in F\}$.
A language $L \subseteq \Sigma^*$ is called \emph{regular}, if there exists an automaton $\mathcal A$
such that $L = L(\mathcal A)$.

For $u \in \Sigma^*$, we write $\delta(p, u) = \delta(q, u)$
if both are defined and the results are equal or both are undefined.

We say that $q$ is \emph{reachable} from $p$ (in $\mathcal A$)
if there exists a word $u \in \Sigma^*$ such that $\delta(p, u) = q$.
%A subset $S \subseteq Q$ is \emph{strongly connected}, if for any $p, q \in S$, the states $p$ and $q$
%are reachable from each other.
%A subset $S \subseteq Q$ is called \emph{stable}, if for any letter $a \in \Sigma$
%we have $\delta(S, a) \subseteq S$.
%For $\mathcal A = (Q, \Sigma, \delta, q_0, F)$ and $\Gamma \subseteq \Sigma$,
%by $\mathcal A_{|\Gamma} = (Q, \Gamma, \delta_{|\Gamma}, q_0, F)$
%we denote the automaton $\mathcal A$ restricted to the subalphabet $\Gamma$, i.e.,
%$\delta_{|\Gamma} : Q \times \Gamma \to Q$ with $\delta_{|\Gamma}(q, a) = \delta(q, a)$
%for $q \in Q$ and $a \in \Gamma$.
The DFA $\mathcal A$ is called \emph{initially connected}, if every state is reachable
from the start state. %Here, if nothing else is said, we assume all automata to be initially connected.

The DFA $\mathcal A = (Q, \Sigma, \delta, q_0, F)$
is called \emph{commutative}, if, for each $a,b \in \Sigma$
and $q \in Q$, we have $\delta(q, ab) = \delta(q, ba)$.

Let $\mathcal A = (Q, \Sigma, \delta, q_0, F)$ be a complete DFA.
For a word $u \in \Sigma^*$, the 
\emph{transition function (in $\mathcal A)$ associated to $u$}
is the function $\delta_u : Q \to Q$ 
given by $\delta_u(q) = \delta(q, u)$ for $q \in Q$.
The \emph{transformation monoid} is $\mathcal T_{\mathcal A} = \{ \delta_u \mid u \in \Sigma^* \}$.
Note that we defined the transformation monoid only for complete DFAs, as this is the only context
where we need this notion here.

To denote transitions in permutation DFAs, we use
a \emph{cycle notation} also used in~\cite{DBLP:journals/tcs/BrzozowskiS19,DBLP:conf/dlt/HospodarM20}.
More formally, $(q_1, \ldots, q_k)$ denotes the cyclic permutation
mapping $q_i$ to $q_{i+1}$ for $i \in \{1,\ldots,k-1\}$ and $q_k$ to $q_1$.
For example, $a = (1,2)(3,4,5)$ means the letter $a$ swaps the states $1$ and $2$,
cyclically permutes the states $3,4$ and $5$ in the indicated order and fixes all other states.

A \emph{variety (of formal languages)} $\mathcal V$~\cite{Eilenberg1976,Pin86,DBLP:reference/hfl/Pin97} associates,
to each alphabet $\Sigma$,
a class of recognizable
languages $\mathcal V(\Sigma^*)$ over $\Sigma$ such
that (1) $\mathcal V(\Sigma^*)$ is a boolean algebra,
(2) if $\varphi : \Sigma^* \to \Gamma^*$ is a homomorphism,
then $L \in \mathcal V(\Gamma^*)$
implies $\varphi^{-1}(L) \in \mathcal V(\Sigma^*)$
and (3) if $L \in \mathcal V(\Sigma^*)$ and $x \in \Sigma$,
then $\{ u \in \Sigma^* \mid xu \in L \}$
and $\{ u\in \Sigma^* \mid ux \in L \}$
are in $\mathcal V(\Sigma^*)$.

\begin{toappendix}

 \begin{remark}
 \label{rem:minimal_aut}
 Let $\mathcal A = (Q, \Sigma, \delta, q_0, F)$. 
 Two states $p, q \in Q$
 are said to be \emph{distinguishable},
 if $\delta(p,u)$ and $\delta(q, u)$ 
 are not equal and there exists a word $u \in \Sigma^*$
 such that precisely one of the two states
 $\delta(p, u)$ or $\delta(q, u)$ is defined
 and final.
 A state is said to be \emph{coaccessible}, if a final state
 is reachable from it.
 The DFA $\mathcal A$ is the smallest DFA recognizing
 $L(\mathcal A)$, i.e., has the least number of states,
 if and only if every state is reachable from the start
 state, every two distinct states are distinguishable
 and every state is coaccessible~\cite{HopUll79}.
 We use this characterization
 later in proving our lower bound in Theorem~\ref{thm:lower_bound}.
 Note that if in a DFA $\mathcal A$
 every state is coaccessible,
 then for every non-empty subset $S \subseteq Q$ of states, we find
 a word $u \in \Sigma^*$ such that $\delta(S, u) \cap F \ne \emptyset$.
 This in particular implies that for the projection automaton, as its states are non-empty
 subsets of states, if we assume
 the input permutation automaton is initially connected, which also implies
 that every state is coaccessible for permutation automata\footnote{By Lemma~\ref{lem:inverses}, for every state of the form $q = \delta(q_0, u)$,
 there exists $u'$ such that $\delta(q_0, uu') = q_0$,
 and as $L(\mathcal A) \ne \emptyset$, there exists $w$
 such that $\delta(q_0, w) \in F$. Hence $\delta(q, u'w) \in F$
 and so, all states reachable from the start state are coaccessible.} 
 with $L(\mathcal A) \ne \emptyset$, 
 that 
 every state is coaccessible. So, we only
 need to show reachability and distinguishablity 
 in the proof of Theorem~\ref{thm:lower_bound}.
 \end{remark}
 
\end{toappendix}

\section{Orbit Sets, Projected Languages and Permutation Automata}
\label{subsec:orbitsets}

% nur complete (da perm automaten dies ja immer)??

% state partition und thm erkennt projection language

First, we introduce the orbit set of a set of states for a subalphabet. An orbit set collects those states that are reachable
from a given set of states by only using words from a given subalphabet.
This is also called \emph{unobservable reach} in~\cite{DBLP:books/daglib/0034521}.

\begin{definition}
\label{def:orbit}
 Let $\mathcal A = (Q, \Sigma, \delta, q_0, F)$ be a DFA.
 Suppose $\Sigma' \subseteq \Sigma$ and $S \subseteq Q$. The
 \emph{$\Sigma'$-orbit of $S$} is the set
 \[ 
  \Orb_{\Sigma'}(S) = \{ \delta(q, u) \mid \delta(q, u) \mbox{ is defined, } q \in S \mbox{ and } u \in \Sigma'^* \}.
 \]
 Also, for $q \in Q$, we set $\Orb_{\Sigma'}(q) = \Orb_{\Sigma'}(\{q\})$.
\end{definition}

 Let $\mathcal A = (Q, \Sigma, \delta, q_0, F)$ be a DFA
 and $\Gamma \subseteq \Sigma$.
 Set $\Delta = \Sigma \setminus \Gamma$.
 Next, we define the \emph{projection automaton} of $\mathcal A$
 for $\Gamma$ as %the automaton
 $
 \mathcal R_{\mathcal A}^{\Gamma} = (\mathcal P(Q), \Gamma, \mu, \Orb_{\Delta}(q_0), E)
 $
 with, for $S \subseteq Q$ and $x \in \Gamma$, the transition function
  \begin{equation}\label{eqn:def_mu}
     \mu(S, x) = \operatorname{Orb}_{\Delta}(\delta(S, x))
 \end{equation}
 and
 %final state set 
 $E = \{ T \subseteq Q \mid T \cap F \ne \emptyset \}$.
 In general, $\mathcal R_{\mathcal A}^{\Gamma}$
 is not initially connected. However, non-reachable states could be omitted. Actually, by the definition of the start state and transition function,
 we can restrict the state set to subsets of the form $\Orb_{\Delta}(S)$
 for $\emptyset \ne S \subseteq Q$.
 
  %This automaton recognizes\footnote{} $\pi_{\Gamma}(L(\mathcal A))$.
 \begin{theoremrep}
  Let $\mathcal A$ be a DFA
  and $\Gamma \subseteq \Sigma$.
  Then, $\pi_{\Gamma}(L(\mathcal A)) = L(\mathcal R_{\mathcal A}^{\Gamma})$.
 \end{theoremrep}
 \begin{proof}
   % orbit closure entspricth eps-clousre intuitively wie in masopast + eps aut hopcroft
  Set $\Delta = \Sigma \setminus \Gamma$.
  Intuitively, we construct an $\varepsilon$-NFA for $\pi_{\Gamma}(L(\mathcal A))$
  from $\mathcal A$ by replacing all unobservable transitions
  with $\varepsilon$-transitions and determinize it, where
  the $\varepsilon$-closure of a subset of states corresponds
  to an $\Delta$-orbit.
  More formally, note that by Equation~\eqref{eqn:def_mu}
  and Definition~\ref{def:orbit},
  for $R \subseteq Q$, $q \in Q$ and $x \in \Sigma$,
  \[
   q \in \mu(R, x) \Leftrightarrow \exists p \in Q \ \exists y \in \Delta^* : p \in \delta(R, x) \land \delta(p, y) = q.
  \]
  and $q \in \Orb_{\Delta}(\{q_0\})$ if and only if
  there exists $y \in \Delta^*$ such that $\delta(q_0, y) = q$.
  Hence, we find, inductively, for every two states $p, q \in Q$ and $u \in \Gamma^*$, that
  $q \in \mu(\{p\}, u)$ if and only if, for $n = |u|$,
  there exist words $v_1, \ldots, v_n \in \Delta^*$,
  letters $x_1, \ldots x_n \in \Gamma$
  and states $p_0, p_1, \ldots, p_n \in Q$
  such that
  \begin{align*}
      p_0 & = p \\
      p_n & = q \\
      p_{i+1} & = \delta(p_i, x_{i+1} u_{i+1}) \mbox{ for } i \in \{0, \ldots,n-1\} \\
      u & = x_1 \cdots x_n.
  \end{align*}
  So, if $w \in \Gamma^*$, we can deduce that $\mu(\Orb_{\Delta}(\{q_0\}), w) \cap F \ne \emptyset$
  if and only if, for $n = |w|$,
  there exist words $v_0, v_1, \ldots, v_n \in \Delta^*$,
  letters $x_1, \ldots, x_n \in \Gamma$
  and states $p_0, p_1, \ldots, p_n$
  such that
   \begin{align*}
      p_0 & = \delta(q_0, v_0) \\
      p_n & \in F \\ % todo nochmal prüfen!
      p_{i+1} & = \delta(p_i, u_i x_{i+1} u_{i+1}) \mbox{ for } i \in \{0, \ldots,n-1\} \\
      w & = x_1 \cdots x_n.
  \end{align*}
  Now, note that, $w \in \pi_{\Gamma}(L(\mathcal A))$
  is equivalent to the existence of $n \ge 0$ and a word $u \in L(\mathcal A)$
  that could be factorized into subwords $u_0, \ldots, u_n \in \Delta$
  and letters $x_1, \ldots, x_n \in \Gamma$
  such that $w = \pi_{\Gamma}(u_0 x_1 u_1 \cdots x_n u_n) = x_1 \cdots x_n$.
  This is equivalent to the above equations
  and so we find
  \[
   \mu(\Orb_{\Delta}(\{q_0\}), w) \cap F \ne \emptyset
   \Leftrightarrow \exists w \in \pi_{\Gamma}(L(\mathcal A)),
  \]
  or $L(\mathcal B) = \pi_{\Gamma}(L(\mathcal A))$.~\qed
 \end{proof}
 
 We do not introduce $\varepsilon$-NFAs formally here, but only refer to the literature~\cite{HopUll79}.
 However, we note in passing that, usually, an automaton for a projected language
 of a regular language is constructed by replacing the letters
 to be deleted by $\varepsilon$-transitions
 and then determinizing the resulting $\varepsilon$-NFA~\cite{HopUll79,DBLP:journals/tcs/JiraskovaM12}.
 Our construction is a more direct formulation of these steps, where the orbit
 sets are used in place of the $\varepsilon$-closure computations.

 In~\cite{DBLP:journals/tcs/JiraskovaM12,DBLP:conf/ifipTCS/JiraskovaM12}, an automaton was called a \emph{state-partition automaton}
 with respect to a projection $\pi_{\Gamma}$ (or, for short, a state-partition automaton for $\Gamma$),
 if the states of the resulting automaton from the above procedure, after discarding non-reachable subsets, form a partition of the original state set. 
 Hence, in our terminology, an automaton $\mathcal A$ %gruppen orbits paritionieren,
 % aber in projection automaton kann es vereinigungen über diese geben
 % also dass nicht ausreichend, sondern auch abgeschlossen unter aufeinander
 % abbilden -> proposition
 %
 % transformationshalbgruppen, für die es auch parition ist
 is a state partition automaton if the reachable
 states of $\mathcal R_{\mathcal A}^{\Gamma}$
 form a partition of the states of $\mathcal A$.
 % state partition up to now the only ones with pi_{gamma}(A) <= n erwähnen? todo

A \emph{permutation automaton} (or \emph{permutation DFA}) is a DFA
$\mathcal A = (Q, \Sigma, \delta, q_0, F)$
such that every letter permutes the state set, i.e., the function $\delta_x : Q \to Q$
given by $\delta_x(q) = \delta(q, x)$ is a permutation, or bijection, of $Q$ for every $x \in \Sigma$. The languages recognized by permutation automata
are called \emph{group languages}. Note that permutation DFAs are complete DFAs.

The \emph{identity transformation (on $Q$)} is the permutation $\operatorname{id} : Q \to Q$
given by $\operatorname{id}(q) = q$ for each $q \in Q$.

 Next, we take a closer look at the orbit sets
 for permutation automata. But first, a general property of permutation
 automata.

\begin{lemmarep}
\label{lem:inverses}
 Let $\mathcal A = (Q, \Sigma, \delta, q_0, F)$ be a permutation automaton
 and $\Sigma' \subseteq \Sigma^*$.
 Then, for every $u \in \Sigma'^*$ there exists $u' \in \Sigma'^*$
 such that $\delta(q, uu') = q$ for each $q \in Q$, i.e,
 the word $uu'$ represents the identity transformation on $Q$.
\end{lemmarep}
\begin{proof} % nicht in unendlicher gruppe, <-1, 1> mit erzeuger <1>

 If $u \in \Sigma'^*$, define a function $\delta_u : Q \to Q$ by
 $\delta_u(q) = \delta(q, u)$.
 Then, $\delta_u$ is a permutation, and as $Q$ is finite, there
 exists $n > 0$ such that $\delta_u^n$ is the identity function, i.e.,
 $\delta_u^n(q) = q$ for each $q \in Q$.
 Then, $\delta(q, u^n) = q$ for each $q \in Q$
 and setting $u' = u^{n-1} \in \Sigma'^*$ gives the claim.~\qed
\end{proof}

 With the previous lemma, we can show that the orbit
 sets for permutation automata partition the state set.
 This property is crucial to derive our state complexity bound
 for projection, as it vastly reduces the possible
 subsets that are reachable in $\mathcal R_{\mathcal A}^{\Gamma}$, namely
 only unions of orbit sets.
 
\begin{lemmarep}
\label{lem:orbits-partition}
 Let $\mathcal A = (Q, \Sigma, \delta, q_0, F)$ be a permutation automaton.
 Suppose $\Sigma' \subseteq \Sigma$. Then, the sets $\Orb_{\Sigma'}(q)$, $q \in Q$,
 partition $Q$ and for every $S \subseteq Q$,
 $
  \Orb_{\Sigma'}(S) = \bigcup_{q \in S} \Orb_{\Sigma'}(q).
 $
\end{lemmarep} % notion of robits well known erwähnen, aber der vollständigkeit halber bzw bahnen von untergruppen
\begin{proof}
  Clearly, we have $q \in \Orb_{\Sigma'}(q)$ for each $q \in Q$. 
  Suppose $q \in \Orb_{\Sigma'}(s) \cap \Orb_{\Sigma'}(t)$.
  Then, we first show $\Orb_{\Sigma'}(s) \subseteq \Orb_{\Sigma'}(t)$.
  Let $p \in \Orb_{\Sigma'}(s)$.
  By the previous assumptions, there exist $u, v, w \in \Sigma'^*$
  such that $q = \delta(s, u) = \delta(t, v)$
  and $p = \delta(s, w)$.
  By Lemma~\ref{lem:inverses}, we can choose $u' \in \Sigma'^*$
  such that $\delta(s, uu') = s$.
  So,
  \begin{align*}
      \delta(t, vu'w) & = \delta(\delta(t, v), u'w) \\ 
                      & = \delta(\delta(s, u), u'w) \\
                      & = \delta(\delta(s, uu'), w) \\
                      & = \delta(s, w) \\
                      & = p.
  \end{align*}
  Hence, $p \in \Orb_{\Sigma'}(t)$.
  As $p$ was chosen arbitrary, we find $\Orb_{\Sigma'}(s) \subseteq \Orb_{\Sigma'}(t)$.
  Similarly, we can show the other inclusion
  and hence $\Orb_{\Sigma'}(s) = \Orb_{\Sigma'}(t)$.

  If $\Orb_{\Sigma'}(q) \cap \Orb_{\Sigma'}(S) \ne \emptyset$,
  then there exists $s \in S$ such that $\Orb_{\Sigma'}(q) \cap \Orb_{\Sigma'}(s) \ne \emptyset$.
  So, by the previous claim, both $\Sigma'$-orbits are equal
  and we find $\Orb_{\Sigma'}(q) \subseteq \Orb_{\Sigma'}(S)$.
  By considering all elements in $S$, this yields $\bigcup_{q \in S} \Orb_{\Sigma'}(q) = \Orb_{\Sigma'}(S)$.~\qed
\end{proof}

% todo
% jetzt noch diskutieren, dass nicht genug für state parition
% und proposition die dazu äquivalent (orbits parition + aufeinander
% abbildbar -> also sowas wie kongruenz in trafo monoid
%
% parittion und verträglich
% corollar primitive permutation automata niemals state paitition
%
% für perm aut state partition <=> primitiv <=> projection <=n eigenschaft?
% (wenn SCC!)

\section{Projection on Permutation Automata}

%\subsection{A General State Complexity Bound}

Here, we state a tight upper bound
for the number of states of the projection
of a language recognized by a permutation automaton.

Our bound is parameterized by the number of states of the input
automaton and by the number of non-loop unobservable
transitions. More specifically, we consider the number of states that are incident
with non-loop unobservable transitions. Hence, we
disregard unobservable multi-transitions and do not take the direction into account, i.e., counting
multiple transitions resulting from multiple letters
between the same states only once and do not take their direction into account.
This is the same usage of this parameter as  in~\cite{DBLP:conf/ifipTCS/JiraskovaM12} for the general
case.

\begin{theorem}
\label{thm:upper_bound}
 Let $\mathcal A = (Q, \Sigma, \delta, q_0, F)$
 be a permutation DFA and $\Gamma \subseteq \Sigma$.
 Set 
 $
  m = |\{\ p, q \in Q \mid p \ne q \mbox{ and } q \in \delta(p, \Sigma \setminus \Gamma) \}|.
 $
 %i.e., $m$ is the number of unobservable transitions for $\pi_{\Gamma}$.
 Then, if $m > 0$, the projected language $\pi_{\Gamma}(L(\mathcal A))$
 is recognizable by a DFA
 with at most $2^{\mathstrut{|Q| - \lceil \frac{m}{2} \rceil}} - 1$
 states and if $m = 0$, the projected language is recognizable 
 by a DFA with at most $|Q|$ states.
\end{theorem}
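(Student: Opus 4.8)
The plan is to exploit the structural fact, established in Lemma~\ref{lem:orbits-partition}, that for a permutation automaton the $\Delta$-orbits (writing $\Delta = \Sigma \setminus \Gamma$) partition $Q$ and satisfy $\Orb_{\Delta}(S) = \bigcup_{q \in S} \Orb_{\Delta}(q)$. Since the start state of $\mathcal R_{\mathcal A}^{\Gamma}$ is $\Orb_{\Delta}(q_0)$ and every transition has the form $\mu(S,x) = \Orb_{\Delta}(\delta(S,x))$, each reachable state is of the shape $\Orb_{\Delta}(T)$ and hence, by the lemma, a union of complete $\Delta$-orbits. Moreover it is nonempty, as $\delta$ is total on a permutation automaton and maps nonempty sets to nonempty sets. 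Thus, if $k$ denotes the number of distinct $\Delta$-orbits, the number of reachable states of $\mathcal R_{\mathcal A}^{\Gamma}$ is at most $2^{k} - 1$, and the whole theorem reduces to bounding $k$ in terms of $|Q|$ and $m$.

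The combinatorial heart is the claim that a state $q$ is incident to a non-loop unobservable transition if and only if its orbit $\Orb_{\Delta}(q)$ is not a singleton; consequently the set of states counted by $m$ is exactly the union of the non-singleton orbits, so $m$ equals the total number of states lying in non-singleton orbits. For the ``only if'' direction I would argue by cases: an outgoing transition $\delta(q,x) = p \ne q$ with $x \in \Delta$ immediately puts $p \ne q$ into $\Orb_{\Delta}(q)$, while an incoming transition $\delta(p,x) = q$ with $p \ne q$ places both $p$ and $q$ into the common orbit $\Orb_{\Delta}(p) = \Orb_{\Delta}(q)$, using the partition property. For the converse, if $\Orb_{\Delta}(q) \ne \{q\}$ then $q$ cannot be fixed by every letter of $\Delta$ --- otherwise an easy induction would give $\delta(q,u) = q$ for all $u \in \Delta^{*}$ --- so some $x \in \Delta$ yields an outgoing non-loop transition at $q$.

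Granting this claim, I would finish by counting orbits. The $|Q| - m$ states outside the union of non-singleton orbits each form a singleton orbit, and the $m$ remaining states are distributed among the non-singleton orbits, each of size at least two; hence there are at most $\lfloor \frac{m}{2} \rfloor$ of the latter. Therefore $k \le (|Q| - m) + \lfloor \frac{m}{2} \rfloor = |Q| - \lceil \frac{m}{2} \rceil$, and the bound $2^{|Q| - \lceil \frac{m}{2} \rceil} - 1$ follows from the first paragraph together with the monotonicity of $2^{k}-1$ in $k$. When $m = 0$ all orbits are singletons, so every reachable state of $\mathcal R_{\mathcal A}^{\Gamma}$ is a singleton and there are at most $|Q|$ of them, giving the second case.

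I expect the main obstacle to be the incoming-transition case of the structural claim: it is the one place where one genuinely needs the permutation structure, via the partition Lemma~\ref{lem:orbits-partition} (which itself rests on the existence of inverses from Lemma~\ref{lem:inverses}), rather than mere forward reachability. Everything else --- the closure of reachable states under taking full orbits and the arithmetic $m - \lfloor \frac{m}{2} \rfloor = \lceil \frac{m}{2} \rceil$ --- is routine once the orbit-counting identity $m = \sum_{O \text{ non-singleton}} |O|$ is in place.
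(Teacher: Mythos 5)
Your proof is correct and follows essentially the same route as the paper's: the paper likewise reduces the bound to counting $\Delta$-orbits via Lemma~\ref{lem:orbits-partition}, establishes your key structural claim as its Equation~\eqref{eqn:S_and_T} (its sets $S$ and $T$ are exactly your union of non-singleton orbits and your singleton-orbit states), and performs the identical count $|T| + \lfloor |S|/2 \rfloor = |Q| - m + \lfloor \frac{m}{2} \rfloor = |Q| - \lceil \frac{m}{2} \rceil$. The only elision is in your $m=0$ case, where ``all orbits are singletons'' by itself bounds the reachable sets only by $2^{|Q|}-1$; as in the paper, you must also observe that the start state $\Orb_{\Delta}(q_0) = \{q_0\}$ is a singleton and that determinism gives $|\delta(R,x)| \le |R|$, so only singleton sets are reachable.
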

\begin{proof}
 % wenn bahn nicht trivial, dann in dieser menge drin p != q komplett drin
 % sgoar disunkte zerlegung in triviale bahnen und die ohne
 %
 % vgl klassenformel von H = < \Gamma >.
 %
 % https://groupprops.subwiki.org/wiki/Orbit-counting_theorem
 Set $\Delta = \Sigma \setminus \Gamma$,
 $S = \{ p, q \in Q \mid p \ne q \mbox{ and } q \in \delta(p, \Delta) \}$ and $T = \{ p \in Q \mid \forall x \in \Delta : \delta(p, x) = p \}$.
 Then, as $\mathcal A$ is a permutation automaton,
 $Q$ is the disjoint union of $S$ and $T$ and
 \begin{equation}
 \label{eqn:S_and_T}
  q \in T \Leftrightarrow \Orb_{\Delta}(q) = \{q\} \mbox{ and }
  q \in S \Leftrightarrow |\Orb_{\Delta}(q)| \ge 2.
 \end{equation}
 % todo S und T durch R erstezen weiter unten
 Set $\mathcal B = \mathcal R_{\mathcal A}^{\Gamma}$.
 If $m = 0$, then $Q = T$ and every $a \in \Delta$ 
 induces a self-loop at every state.
 In this case, it is clear that we can simply leave out all the transitions
 labeled with letters from $\Delta$
 and the resulting permutation automaton recognizes $\pi_{\Gamma}(L(\mathcal A))$.
 More formally, in the definition of $\mathcal B$, in this case,
 the starting state is $\{q_0\}$
 and as $\mathcal A$ is deterministic we have $|\delta(R, x)| \le |R|$
 for every $R \subseteq Q$. So, as the empty set is never reachable for permutation DFAs,
 only the singleton sets $\{q\}$
 are reachable in $\mathcal B$.
 
 Now, suppose $m = |S| > 0$, which implies $m \ge 2$.
 
 \begin{claiminproof}
  Let $m > 0$. Then, in $\mathcal B$ at most  $2^{|Q| - \lceil \frac{m}{2} \rceil} - 1$
  states are reachable from the start state.
 \end{claiminproof}
 \begin{claimproof}
  With the assumption $m > 0$, there exists $q \in Q$
  such that $|\Orb_{\Delta}(q)| > 1$.
  By Equation~\eqref{eqn:S_and_T} and Lemma~\ref{lem:orbits-partition},
  we have at most $|T| + \Big\lfloor \frac{|S|}{2} \Big\rfloor$
  many $\Delta$-orbits, where the maximum number of $\Delta$-orbits is reached if
  every $\Delta$-orbit of a state from $S$ has size exactly two if $|S|$
  is even or every such orbit has size two, except one
  that has size three, if $|S|$ is odd.
  By Lemma~\ref{lem:orbits-partition},
  the sets $\Orb_{\Delta}(\{q\})$, $q \in Q$,
  partition the state set and, for every $R \subseteq Q$,
  we have $\bigcup_{q \in R} \Orb_a(q) = \Orb_a(R)$.
  So, by Equation~\eqref{eqn:def_mu}, every set
  reachable is a union of $\Delta$-orbits,
  i.e., every such set corresponds uniquely 
  to a subset of $\Delta$-orbits for a single state.
  Finally, note that, as $\mathcal A$ is a permutation automaton,
  and hence complete, we have $\delta(R, x) \ne \emptyset$
  for every non-empty $R \subseteq Q$, which also gives that in $\mathcal B$
  the empty set is not reachable.
  So, in total, we find that at most 
  \[ 2^{|T| + \Big\lfloor \frac{|S|}{2} \Big\rfloor} - 1
   = 2^{|Q| - m + \lfloor \frac{m}{2} \rfloor} - 1
   = 2^{|Q| - \lceil \frac{m}{2} \rceil} - 1
  \]
  subsets of states are reachable.
 \end{claimproof}
 
% als beispiel!
%  %
%  % s1 ->a s3 ->a s2 -> s1, s4->a s4
%  % bahnen für b
%  % {s1,s2} ->a {s3,s4} \cup {s1,s2}
%  % auf letzter macht a self-loop, also kein Perm-aut
%  %
%  \begin{claiminproof}
%   The automaton $\mathcal B$ is a permutation automaton
%   when restricted to the set of reachable states.
%  \end{claiminproof}
%  \begin{claimproof} % surjektiv argumentieren bzw eien menge.
  
%  \end{claimproof}
%
% vielleicht auch beispiel unärer automatn (also alles bis auf ein zeichen löschen)
% wo es kein perm aut, mit folgerung das dafür bessere schranke?

 So, we have shown the upper bound.~\qed 
% TODO, vertausht Delta = Sigma\Gamma setzen, diese bahnen!
\end{proof}

\begin{toappendix}
For the lower bound, we need the following well-known fact from permutation group theory~\cite{Isaacs95}.

\begin{lemma}\label{lem:transposition_and_cycle_gen_Sn}
 Let $\sigma, \tau : \{1,2,\ldots, n\} \to \{1,2,\ldots,n\}$ for $n \ge 2$
 be the permutations $\sigma = (1,\ldots,n)$
 and $\pi = (1,2)$.
 Then, every permutation of $\{1,2,\ldots,n\}$ could be written as a product
 of the permutations~$\sigma$ and~$\tau$.
\end{lemma}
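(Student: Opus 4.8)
The plan is to follow the classical three-step argument showing that a full $n$-cycle together with a single transposition generates the symmetric group. First I would record the conjugation rule for permutations: for any permutation $\rho$ and any transposition $(a,b)$, one has $\rho\,(a,b)\,\rho^{-1} = (\rho(a), \rho(b))$. Applying this with $\rho = \sigma^k$ to the transposition $\tau = (1,2)$, and using that $\sigma$ sends $i$ to $i+1$ with indices read modulo $n$, I obtain $\sigma^k \tau \sigma^{-k} = (1+k, 2+k)$. Letting $k$ range over $0, 1, \ldots, n-2$ then produces every adjacent transposition $(i, i+1)$ for $1 \le i \le n-1$. Since $\sigma^n = \operatorname{id}$, each $\sigma^{-k}$ is itself a power of $\sigma$, so all of these adjacent transpositions lie in the subgroup generated by $\sigma$ and $\tau$.

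Next I would show that the adjacent transpositions generate every transposition. Using the elementary identity $(a,c) = (b,c)(a,b)(b,c)$ for pairwise distinct $a,b,c$, one can express an arbitrary transposition $(i,j)$ with $i < j$ as a product of adjacent transpositions, for instance via the telescoping product $(i,j) = (j-1,j)\cdots(i+1,i+2)(i,i+1)(i+1,i+2)\cdots(j-1,j)$, which is verified by a short induction on $j - i$. Hence all transpositions belong to the generated subgroup. Finally I would invoke the standard fact that the transpositions generate all of $S_n$: decomposing an arbitrary permutation into disjoint cycles and writing each cycle as $(a_1, a_2, \ldots, a_k) = (a_1, a_2)(a_2, a_3)\cdots(a_{k-1}, a_k)$ exhibits it as a product of transpositions. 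Chaining the three reductions shows that $\sigma$ and $\tau$ generate $S_n$.

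This is a routine and well-known result, so I do not expect a genuine obstacle; the only points requiring care are the index bookkeeping modulo $n$ in the conjugation step and the base/inductive cases of the identity expressing a general transposition through adjacent ones. I would also note in passing that the statement uses $\tau$ for the transposition named $\pi = (1,2)$ in the hypotheses, and treat these as the same permutation throughout.
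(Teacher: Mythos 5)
Your proof is correct, and it is the canonical argument: conjugating $(1,2)$ by powers of the $n$-cycle to get all adjacent transpositions, expanding an arbitrary transposition through adjacent ones, and then decomposing any permutation into transpositions. The index bookkeeping is fine ($k \le n-2$ avoids any wraparound), and you correctly identified and resolved the typo in the statement, where the transposition is introduced as $\pi$ but used as $\tau$. For comparison: the paper does not prove this lemma at all; it states it as a well-known fact of permutation group theory and delegates the proof to the literature (a citation to Isaacs' algebra text), which is a defensible choice for a result this standard. So your write-up supplies exactly the argument the paper leaves implicit, and nothing in it conflicts with how the lemma is used later (namely, that any desired permutation of the relevant states can be realized by a word over the two generating letters).
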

\end{toappendix}

Next, we show that the bound stated in the previous theorem
is actually tight for permutation automata.

%
% todo: remark  für m = 1 geht kleineres alphabet
%

 \begin{figure}
\centering
\scalebox{.75}{
\tikzset{elliptic state/.style={draw,ellipse}}
\begin{tikzpicture}[>=latex',shorten >=1pt,node distance=3.2cm and 2.1cm,on grid,auto,baseline=0]
 %\tikzset{every state/.style={minimum size=1pt}}
 \node [elliptic state] (1) {$1$};
 \node [elliptic state, right =of 1] (2) {$2$};
 \node [elliptic state, right =of 2] (3) {$3$};
 \node [elliptic state, right =of 3] (4) {$4$};
 \node [right =of 4] (5) {$\ldots$};
 \node [right =of 5] (5b) {$\ldots$};
 \node [elliptic state, right =of 5b] (6) {$2m-1$};
 \node [elliptic state, right =of 6] (7) {$2m$};
%  \node [elliptic state, right =of 7] (8) {$2m + 1$};
%  \node [elliptic state, right =of 8] (9) {$2m + 2$};
%  \node [right =of 9] (10) {$\ldots$};
%  \node [elliptic state,right =of 10] (11) {$n - 1$};
%  \node [elliptic state,right =of 11] (12) {$n$};
 \node [initial below,accepting,elliptic state, below =of 1] (8) {$n$};
 \node [elliptic state, right =of 8] (9) {$n - 1$};
 \node [right =of 9] (10) {$\ldots$};
 \node [elliptic state, right =of 10] (11) {$2m + 3$};
 \node [elliptic state, right =of 11] (12) {$2m + 2$};
 \node [elliptic state, right =of 12] (13) {$2m + 1$};
 
 \path[->] (1) edge [bend left] node {$d, e$} (3);
 \path[->] (2) edge [bend left] node {$d, e$} (4);
 \path[->] (3) edge [bend left] node {$d$} (1);
 \path[->] (4) edge [bend left] node {$d$} (2);
 \path[->] (6) edge [bend right=40,above] node {$e$} (1);
 \path[->] (7) edge [bend right=40,above] node {$e$} (2); 
 \path[->] (3) edge [bend left] node {$e$} (5);
 \path[->] (4) edge [bend left] node {$e$} (5b);
 \path[->] (5) edge [bend left] node {$e$} (6);
 \path[->] (5b) edge [bend left] node {$e$} (7);

 \path[->] (1) edge [bend left, right] node {$f,g$} (8);
 \path[->] (8) edge [bend left, right] node {$f,g$} (1);
 \path[->] (2) edge [bend left] node {$g$} (9);
 \path[->] (9) edge [bend left,right] node {$g$} (2);  
 
 \path[->] (1) edge [bend left,below] node {$a$} (2);
 \path[->] (2) edge [bend left,above] node {$a$} (1);
  \path[->] (3) edge [bend left,below] node {$a$} (4);
 \path[->] (4) edge [bend left,above] node {$a$} (3);
 \path[->] (6) edge [bend left,below] node {$a$} (7);
 \path[->] (7) edge [bend left,above] node {$a$} (6);
 
 \path[->] (12) edge [bend right=15] node {$b$} (13);
 \path[->] (13) edge [bend right,above] node {$b,c$} (12);
 
 \path[->] (12) edge [bend right,above] node {$c$} (11);
 \path[->] (11) edge [bend right,above] node {$c$} (10);
 \path[->] (10) edge [bend right,above] node {$c$} (9);
 \path[->] (9)  edge [bend right,above] node {$c$} (8);
 \path[->] (8)  edge [bend right=15] node {$c$} (13);
  
\end{tikzpicture}}
\caption{All transitions not shown, for example for the letter $b$ at the state $n$,
 correspond to self-loops, as permutation automata are complete. 
 Then, the permutation automaton shown reaches the upper bound stated in Theorem~\ref{thm:upper_bound}
 for the projection $\pi_{\Gamma} : \{a,b,c,d,e,f,g\}^* \to \Gamma^*$ with $\Gamma = \{b,c,d,e,f,g\}$. }
 \label{fig:lower_bound}
\end{figure}
 
\begin{theoremrep}
\label{thm:lower_bound}
 Let $n, m > 0$ be such that $0 < 2m + 1 < n$, $\Sigma = \{a,b,c,d,e,f,g\}$
 and $\Gamma = \{b,c,d,e,f,g\}$.
 Then, there exists a permutation automaton $\mathcal A = (Q, \Sigma, \delta, q_0, F)$
 with $2m$
 states incident to non-loop unobservable transitions for~$\pi_{\Gamma}$, i.e.,
 \[
  2m = |\{\ p, q \in Q \mid p \ne q \mbox{ and } q \in \delta(p, \Sigma \setminus \Gamma) \}|,
 \]
 such that every DFA for $\pi_{\Gamma}(L(\mathcal A))$
 needs at least $2^{n - m} - 1$ 
 states.
\end{theoremrep}
\begin{proofsketch}
 See Figure~\ref{fig:lower_bound} for a permutation automaton
 giving the lower bound.
 The automaton has $n$ states, and the letters act the following way:
 \[ 
 \begin{array}{llll}
     a & = (1,2)(3,4) \cdots (2m-1, 2m), \\ 
     b & = (2m + 1, 2m + 2), & c = (2m + 1, 2m + 2, \ldots, n), \\
     d & = (1,3)(2,4), & e = (1,3,\ldots,2m-1)(2,4,\ldots, 2m), \\
     f & = (1,n), & g = (1,n)(2, n-1).
 \end{array}
 \]
 With $\Delta = \{a\}$, the $\Delta$-orbits are 
 $\{1,2\},\{3,4\},\ldots, \{2m-1,2m\}, \{2m+1\}, \ldots, \{n\}$.
 The letters $b,c$ are chosen such that every permutation of the 
 states $\{2m+1,\ldots,n\}$ could be written as a word over them,
 and the letters $d$ and $e$ such that every permutation on the
 $\Delta$-orbits $\{1,2\}, \ldots, \{2m-1,2m\}$ could be written
 as a word over them. The letters $f$ and $g$ help to map between
 these $\Delta$-orbits in such a way that every non-empty
 union of $\Delta$-orbits is reachable, and all these $\Delta$-orbits
 give distinguishable states. By mapping onto the two element $\Delta$-orbits
 and back, we can enlarge the sets that are reachable.\qed

\end{proofsketch}
\begin{proof}
 Let $\Sigma = \{a,b,c,d,e,f,g\}$ and $\mathcal A = (\Sigma, Q, \delta, q_0, F)$
 with $Q = \{1,2,\ldots, n\}$,
 start state $q_0 = n$ and $F = \{ n \}$. 
 The transitions were already defined in the proof sketch
 in the main text. So, here we only give
 the proof in detail.
 Note that, the letters $d$ and $e$
 permute the subsets $\{ i, i + 1\}$
 for $i \in \{1,\ldots, 2m\}$ and we
 have, for each $i, j \in \{1,3,\ldots, 2m-1\}$, 
 that there exists, by Lemma~\ref{lem:transposition_and_cycle_gen_Sn}
 and the definition of the transition function in the projection automaton,
 a word $u \in \{d,e\}^*$ % genauer, jede permutation todo
 such that 
 \[
  \mu(\{ i, i + 1 \}, u) = \{ j, j + 1 \}.
 \]
 Also, observe that for $u \in \{b,c\}^*$ and $v \in \{d,e\}^*$
 we have $\delta(q, uv) = \delta(q, vu)$ for each $q \in Q$.

 Set $\Delta = \{a\}$ and $\Gamma = \{b,c,d,e,f,g\}$. We will refer to the projection automaton
 \[
  \mathcal R_{\mathcal A}^{\Gamma} = (\mathcal P(Q), \Gamma, \mu, \Orb_{\Delta}(q_0), E) 
 \]
 in the following. Here, we have $\Orb_{\Delta}(q_0) = \{q_0\}$.

 Let $S$
 be the set
 \begin{multline*}
     \{ \emptyset \ne A \subseteq Q \mid 
     \forall i \in \{1,3,\ldots, 2m - 1\} : \{ i, i + 1 \} \subseteq A \mbox{ or } \{ i, i + 1 \} \cap A = \emptyset \}.
 \end{multline*}
 Then, $S = \{ \Orb_{\Delta}(A) \mid \emptyset \ne A \subseteq Q \}$
 and $S$ could be seen as a subset of states for the projection 
 automaton $\mathcal R_{\mathcal A}^{\Gamma}$.

 The following equations are readily implied and will be used frequently without further mention:
 \begin{align}
     \delta(A, u)  & = \mu(A, u) \mbox{ for } A \in S, u \in \{d,e\}^* \mbox{ or } u \in \{b,c\}^*; \label{eqn:b_c_d_e_permute_a_orbits} \\
     \mu(\{n\}, x) & = \{1,2\} \mbox{ for } x \in \{f,g\}; \label{eqn:g_and_n}\\ 
     \mu(\{1,2\}, f) & = \{1,2\} \cup \{n\}; \\ 
     \mu(A \cup B, u) & = \mu(A, u) \cup \mu(B, u) \mbox{ for } A, B \subseteq Q, u \in \Gamma^*.
 \end{align}
 The last one also implies, for $A \subseteq B$,
 that $\mu(A, u) \subseteq \mu(B, u)$ for $u \in \Gamma^*$.

 We will also use the following function $h : \{1,\ldots, n\} \to \{1,\ldots, n - m \}$
 given by $h(i) = h(i+1) = \frac{i+1}{2}$ for $i \in \{1,3,\ldots, 2m - 1 \}$
 and $h(i) = i - m$ for $i \in \{ 2m + 1, \ldots,n\}$
 in the following arguments.

 \begin{claiminproof}
  We have $|S| = 2^{n - m} - 1$. 
 \end{claiminproof}
 \begin{claimproof}
  We have $S = \{ h^{-1}(A) \mid \emptyset \ne A \subseteq \{1,\ldots, n - m\} \}$
  and $h^{-1}(A) \ne h^{-1}(B)$ for distinct $A, B \subseteq \{1,\ldots, n -m \}$.
  So, as there exist $2^{n - m} - 1$ non-empty subsets
  of $\{1,\ldots, n - m\}$, the claim follows.
 \end{claimproof}
 
 \begin{claiminproof}
  The subsets in $S$ are all reachable as states of $\mathcal R_{\mathcal A}^{\Gamma}$.
 \end{claiminproof}
 \begin{claimproof}
  If $A \in S$, then there exists a non-empty set $B \subseteq \{1,\ldots,n-m\}$
  such that $A = h^{-1}(B)$.
  We do induction on the size of $B$.
  
  \begin{enumerate}
  \item Suppose $|B| = 1$. 
   First, if $h^{-1}(B) \subseteq \{2m+1,\ldots,n\}$,
   then $|h^{-1}(B)| = 1$.
   There exists a permutation of $\{2m+1,\ldots,n\}$
   that maps $q_0$ to the state in $h^{-1}(B)$
   Then, by Lemma~\ref{lem:transposition_and_cycle_gen_Sn},
   there exists $u \in \{b,c\}^*$
   such that $\delta(q_0, u) \in h^{-1}(B)$.
   As the letters $b$ and $c$
   permute the $\Delta$-orbits, see also Equation~\eqref{eqn:b_c_d_e_permute_a_orbits}, 
   and do not map anything in $\{2m+1,\ldots,n\}$
   to a state outside of this set,
   we have, for every prefix\footnote{The word $v$ is a prefix of $u$, if there exists $w \in \Sigma^*$ such that $u = vw$.} $v$ of $u$,
   \[
   \mu(\{q_0\}, v) = \Orb_{\Delta}(\delta(\{ q_0 \}, v) = \{ \delta(q_0, v) \} 
   \] 
   and so $\mu(\{q_0\}, u) = \{ \delta(q_0, u) \} = h^{-1}(B)$.
  
   Otherwise, suppose $h^{-1}(B) = \{ i, i + 1 \}$
   for $i \in \{1,3,\ldots, 2m-1\}$.
   We have
   \[
    \mu(\{q_0\}, f) = \Orb_{\Delta}(\delta(\{q_0\}, f)) = \{1,2\}. 
   \]
   By Lemma~\ref{lem:transposition_and_cycle_gen_Sn},
   there exists a word $u \in \{d,e\}^*$ such that
   $\delta(\{1,2\}, u) = \{i,i+1\}$
   and as the letters $d$ and $e$
   permute the $\Delta$-orbits, see also Equation~\eqref{eqn:b_c_d_e_permute_a_orbits}, 
   and do not map anything in $\{1,\ldots,2m\}$
   to a state outside of this set,
   we find $\mu(\{1,2\}, u) = \{i, i + 1\}$
   and so $\mu(\{q_0\}, fu) = h^{-1}(B)$.
   
   \medskip 
   
  \item Suppose $|B| > 1$.
   Let $k \in B$ and consider $C = B \setminus \{k\}$.
   Inductively, there exists $u \in \Gamma^*$
   such that $\mu(\{q_0\}, u) = h^{-1}(C)$. So, we only need to show that we could reach $h^{-1}(B)$
   from $h^{-1}(C)$.
   We distinguish several cases.
   
   \begin{enumerate}
   \item Suppose $|h^{-1}(k)| = 2$ and $h^{-1}(C) \cap \{2m+1, \ldots, n\}\ne \emptyset$.   
    
    \medskip
    
    Let $v \in \{b,c\}^*$
    describe the permutation that interchanges $n$
    with some state in $h^{-1}(C)$
    (or the identity transformation if $n \in h^{-1}(C)$) and maps everything else to itself
    and, similarly, $w \in \{d,e\}^*$ the permutation
    that interchanges $h^{-1}(k)$ with $\{1,2\}$ (or the identity transformation if $h^{-1}(k) = \{1,2\}$)
    and maps everything else (also the other $\Delta$-orbits)
    to itself. Note that $\delta_{vv}$ and $\delta_{ww}$ 
    represent the identity transformation.
    Existence of these words is guaranteed by Lemma~\ref{lem:transposition_and_cycle_gen_Sn}.
    Then,
    \begin{equation}\label{eqn:no_12_in_set}
     \{1,2\} \cap \mu(h^{-1}(C), w) = \emptyset.
    \end{equation}
    and 
    \[ 
     n \in \mu(h^{-1}(C), v). 
    \]
    Also, note that $\delta(n, v) \in h^{-1}(C)$, so $w$ acts like the identity transformation on $\{2m+1,\ldots,n\}$
    and $\{n\} \cap h^{-1}(C) \setminus\{\delta(n,v)\} = \emptyset$.
    Recall that the words $v$ and $u$ permute the $\Delta$-orbits,
    and so application of $\delta$ to $\Delta$-orbits and $\mu$ give the same result,
    see also Equation~\eqref{eqn:b_c_d_e_permute_a_orbits}.
    Using all this, we find 
    %note that $\mu(h^{-1}(C), vw) = \mu(h^{-1}(C), wv)$,
    \begin{align*} 
    &  \mu(h^{-1}(C), vwffvw) \\
     & = \mu(\mu(h^{-1}(C), v), wffvw) \\ 
      & = \mu(\{n\} \cup \mu(h^{-1}(C) \setminus \{\delta(n,v)\}, wffvw) \\
      & =\mu(\{n\}, wffvw) \cup \mu(h^{-1}(C) \setminus\{\delta(n,v)\}, wffvw) \\
        & = \mu(\{n\}, ffvw) \cup \mu(h^{-1}(C) \setminus\{\delta(n,v)\}, wvw) & \mbox{[Equation~\eqref{eqn:no_12_in_set}]} \\
       & = \mu(\{1,2\}, fvw) \cup \mu(h^{-1}(C) \setminus \{\delta(n,v)\}, wvw) \\
       & = \mu(\{1,2,n\}, vw) \cup \mu(h^{-1}(C)\setminus \{\delta(n,v)\}, wvw) \\
       & = \mu(\{1,2\},vw) \cup \{ \delta(n, vw) \} \cup \mu(h^{-1}(C) \setminus \{\delta(n,v)\}, wvw) \\
       & =\mu(\{1,2\},vw) \cup \{ \delta(n, vw) \} \cup \mu(h^{-1}(C) \setminus \{n\},vww) \\ 
       & = \mu(\{1,2\},vw) \cup \{ \delta(n, v) \} \cup \mu(h^{-1}(C) \setminus \{n\},ww) \\ 
       & = \mu(\{1,2\},vw) \cup \{ \delta(n, v) \} \cup \mu(h^{-1}(C) \setminus \{n\},ww) \\ 
       & = \mu(\{1,2\},w) \cup \{ \delta(n, v) \} \cup \mu(h^{-1}(C) \setminus \{n\},ww) \\
       & = \mu(\{1,2\},w) \cup \{ \delta(n, v) \} \cup \mu(h^{-1}(C) \setminus \{n\},ww) \\
        & = \mu(\{1,2\},w) \cup \{ \delta(n, v) \} \cup h^{-1}(C) \setminus \{n\} \\
       & = h^{-1}(k) \cup h^{-1}(C) \\
       & = h^{-1}(B).
    \end{align*}
    
     \item Suppose $|h^{-1}(k)| = 2$ and $h^{-1}(C) \cap \{1, \ldots, 2m\}\ne \emptyset$.   
    
    \medskip
    
     In that case, there exists some minimal $i \in \{1,3,\ldots, 2m-1\}$
     such that $\{i,i+1\} \subseteq h^{-1}(C)$.
     We only sketch the argument, as it is similar to the other cases, and give a more bird's eye view here. Existence
     of all the words is guaranteed by Lemma~\ref{lem:transposition_and_cycle_gen_Sn},
     and, as before, we choose some of the words such that $\delta$ and $\mu$
     coincide on the $\Delta$-orbits (but, of course, not on the final concatenation of all words, as
     we want to enlarge the sets).
     First, choose a word $v \in \{d,e\}^*$ that 
     interchanges $\{1,2\}$ and $\{i,i+1\}$ and maps every other $\Delta$-orbit
     of a single state to itself, or let $v$ be the identity transformation
     if $\{1,2\} = \{i,i+1\}$.
     Then, apply $v$ to $h^{-1}(C)$ and then apply $g$.
     The resulting set contains $\{n,n-1\}$.
     Then, apply~$f$. The result contains $\{1,2\}, \{n-1\}$ and all the previous 
     states. 
     %Choose a word $w \in \{b,c\}^*$
     %that maps $n-1$ to $n$.
     Then, apply $v$ again. This, essentially, reverses all the previous operations 
     on the states in $\{1,\ldots,2m\}$. So, we get the state set $h^{-1}(C) \cup \{n-1\}$.
     Let $w \in \{d,e\}^*$ be a word that interchanges 
     $\{1,2\}$ with $h^{-1}(k)$ or the identity transformation if both are equal.
     Then, if we apply $w$ we end up with a set in $\{1,\ldots, 2m\}$ that does not contain $\{1,2\}$
     but $\{n-1\}$. Then, apply $g$
     and $\{n-1\}$ gets mapped to $\{1,2\}$ by $\mu$, and everything else remains the same.
     A final application of $w$ gives $h^{-1}(k) \cup h^{-1}(C)$.
    
     \medskip
    
    \item $|h^{-1}(k)| = 1$ and $h^{-1}(C) \cap \{2m+1,\ldots,n\} \ne \emptyset$.
    
     \medskip 
     
     By Lemma~\ref{lem:transposition_and_cycle_gen_Sn}
     there exist $v, w \in \{b,c\}^*$
     such that
     \begin{enumerate}
     \item $v$ represents the permutation that interchanges $n-1$
      with the state in $h^{-1}(k)$ or  the identity permutation
      if $h^{-1}(k) = \{n-1\}$,
     
     \item $w$ represents the permutation that interchanges $n$
      with some state in $\delta(h^{-1}(C) \cap \{2m+1,\ldots,n\}, v) \subseteq \{2m+1,\ldots,n\}$
      or the identity permutation if $n \in \delta(h^{-1}(C) \cap \{2m+1,\ldots,n\}, v)$.
     \end{enumerate}
     
     Then, also using $h^{-1}(k) \subseteq \{ 2m+1, \ldots,n \} \setminus h^{-1}(C)$
     and Equation~\eqref{eqn:b_c_d_e_permute_a_orbits},
     we have $n - 1 \notin \mu(h^{-1}(C), v)$
     in both cases $n - 1 \in h^{-1}(C)$, when the state in $h^{-1}(k)$
     gets mapped to $n - 1$, or $n - 1\notin h^{-1}(C)$,
     as then the states $\{ h^{-1}(k), n - 1\}$ outside of $h^{-1}(C)$
     are swapped (or mapped to itself if they are equal) and everything else is mapped to itself.
     Furthermore,
     \[
     n - 1 \notin \mu(h^{-1}(C), vw)
     \] 
     as $n - 1 \notin \mu(h^{-1}(C), v)$
     implies that $w$ interchanges $n$ with some state different
     from $n - 1$ or is the identity permutation.
     Also, by choice of $w$,
     \[
       n  \in  \mu(h^{-1}(C), vw).
     \]
     So, by Equation~\eqref{eqn:g_and_n},
     \[ %wenn {1,2} schon vorher, dann nur 1x g anwenden genügt
      \{1,2\} \subseteq \mu(h^{-1}(C), vwg) 
     \]
     and hence
     \[
      \{n-1,n\} \subseteq \mu(h^{-1}(C), vwgg). 
     \]
     As $g$ only moves $\{1,2,n-1,n\}$ by swapping $1$ with $n$ and $2$ with $n-1$,
     and by the choice of $w$, we have
     \[
      \mu(h^{-1}(C), vwggw) = \mu(h^{-1}(C), v) \cup \mu(\{n-1\}, w).
     \]
     As written above, because of $n - 1 \notin \mu(h^{-1}(C), v)$,
     we have $\delta(n-1, w) = n -1$ and so, by Equation~\eqref{eqn:b_c_d_e_permute_a_orbits}, 
     the above equals
     \[
      \mu(h^{-1}(C), v) \cup \{n-1\}.
     \]
     By choice of $v$ and Equation~\eqref{eqn:b_c_d_e_permute_a_orbits}, 
     $\mu(h^{-1}(C), vv) = h^{-1}(C)$ 
     and $\mu(\{n-1\}, v) = h^{-1}(k)$.
     So,
     \[
      \mu(h^{-1}(C), vwggwv) = h^{-1}(C) \cup h^{-1}(k) = h^{-1}(B).
     \]
     
    \item $|h^{-1}(k)| = 1$ and $h^{-1}(C) \cap \{2m+1,\ldots,n\} = \emptyset$.   
      
      \medskip 
      
      Then, as $C \ne \emptyset$,
      there exists $i \in \{1,\ldots, 2m\}$
      such that $\{i,i+1\} \subseteq h^{-1}(C)$.
      Let $v \in \{d,e\}^*$ be the permutation that
      interchanges $\{i, i +1\}$ with $\{1,2\}$
      or the identity transformation if $\{i,i+1\} = \{1,2\}$.
      Also, let $w \in \{b,c\}^*$ be a
      permutation that maps $n$ to the state in $h^{-1}(k)$.
      Then,
      \[
       n \in \mu(h^{-1}(C), vf).
      \]
      Note that, after applying $vf$, as $2$ is not moved by $f$, the $\Delta$-orbit
      $\{1,2\}$ is still contained in $\mu(h^{-1}(C), vf)$.
      Then,
      \[
       \mu(h^{-1}(C), vfvw) = h^{-1}(C) \cup h^{-1}(k) = h^{-1}(B).
      \]
    \end{enumerate}

   So, we have shown that $h^{-1}(B)$ is reachable from $\{q_0\}$
   in $\mathcal R_{\mathcal A}^{\Gamma}$ in all cases for $h^{-1}(C)$.
  \end{enumerate}

   By induction, we can conclude that every set $h^{-1}(B) \in S$
   with $B \ne \emptyset$ is reachable in $\mathcal R_{\mathcal A}^{\Gamma}$.
 \end{claimproof}

 \begin{claiminproof}
  The subsets in $S$ are distinguishable as states of $\mathcal R_{\mathcal A}^{\Gamma}$.
 \end{claiminproof}
 \begin{claimproof}
  Let $A, B \in S$ be distinct.
  Without loss of generality, suppose $q \in A \setminus B$.
  We distinguish two cases.
  \begin{enumerate}
  
  \item Suppose $q \in \{2m+1,\ldots,n\}$. By\footnote{The letter $b$ interchanges $2m+1$
  and $2m+2$ and the letter $c$ cyclically permutes the states $\{2m+1,\ldots,n\}$.
  So, by Lemma~\ref{lem:transposition_and_cycle_gen_Sn},
  every permutation of the states $\{2m+1,\ldots,n\}$ could be written
  as a word over $b$ and $c$.}
  Lemma~\ref{lem:transposition_and_cycle_gen_Sn},
  there exists $u \in \{b,c\}^*$
  such that $\delta(q, u) \in F$.
  As $|F| = 1$ and $u$ is a permutation of all the states $Q$,
  for each $q' \in Q \setminus \{q\}$,
  we have $\delta(q', u) \notin F$.
  In particular, we find $\delta(B, u) \cap F = \emptyset$
  and $\delta(A, u) \cap F \ne \emptyset$.
  Then, note that for $q \in \{2m+1,\ldots,n\}$,
  and $u \in \{b,c\}^*$, as $\Orb_{\Delta}(q)$ is a singleton set
  and $\delta(\{2m+1, \ldots, n), b) = \delta(\{2m+1,\ldots,n\}, c) = \{2m+1,\ldots,n\}$,
  we find $\mu(A, u) = \delta(A, u)$
  and $\mu(B, u) = \delta(B, u)$. 
  Hence, $\mu(B, u) \cap F = \emptyset$
  and $\mu(A, u) \cap F \ne \emptyset$
  and $A$ and $B$ are distinguishable in $\mathcal R_{\mathcal A}^{\Gamma}$.
  \item Suppose $q \in \{1,\ldots,2m\}$. Then
   $\Orb_{\Delta}(q) \subseteq A$.
   By Lemma~\ref{lem:transposition_and_cycle_gen_Sn}, there
   exists $u \in \{d,e\}^*$
   such that $\delta(\Orb_{\Delta}(q), u) = \{1,2\}$.
   As the letters $d$ and $e$
   permute the $\Delta$-orbits, for every prefix
   $v$
   of $u$, we have
   \[
    \mu(\Orb_{\Delta}(q), v) = \Orb_{\Delta}(p)
   \]
   for some $p \in \{1,\ldots,2m\}$, i.e, the image is the $\Delta$-orbit of a single state,
   and so
   $\mu(\Orb_{\Delta}(q), u) = \{1,2\}$.
   Then,
   \[
    \mu(\{1,2\}, f) = \Orb_{\Delta}(\delta(\{1,2\}, f))
     = \Orb_{\Delta}(\{2,n\}) = \{1,2,n\}
   \]
   and so $\mu(A, uf) \cap F \ne \emptyset$.
   As $uf$ permutes the states $Q$,
   we have $\delta(B, uf) \cap F = \emptyset$.
   Also, as $u$ permutes the $\Delta$-orbits,
   \[
    \mu(\Orb_{\Delta}(q), u) \cap \{1,2\} = \emptyset.
   \]
   So $\mu(\Orb_{\Delta}(q), uf) \cap F = \emptyset$ (more specifically, here $\mu(\Orb_{\Delta}(q), uf) = \mu(\Orb_{\Delta}(q), u)$).
   \end{enumerate}
  Hence, in both cases the states $A$
  and $B$ are distinguishable in $\mathcal R_{\mathcal A}^{\Gamma}$.
 \end{claimproof}

 So, the states in $S$ of $\mathcal R_{\mathcal A}^{\Gamma}$
 are all reachable and distinguishable. 
 Also, the input automaton is initially connected,
 so, as written at the end of Remark~\ref{rem:minimal_aut},
 every state is coaccessible, and so, also as written in Remark~\ref{rem:minimal_aut},
 every state in $\mathcal R_{\mathcal A}^{\Gamma}$
 is coaccessible. Note that we needed every state to be coaccessible, as
 we are talking about minimal DFAs, which could be partial. For example, a reachable non-coaccessible
 non-final state might be distinguishable from every other state, but nevertheless could be left out.
 Hence, the minimal
 automaton needs at least $|S|$ states and the statement follows.~\qed
\end{proof}

\begin{remark}
 Note that if $\mathcal A = (Q, \Sigma, \delta, q_0, F)$ is initially connected
 and
 has the property that from every state $q \in Q$ a final state
 is reachable, then also $\mathcal R_{\mathcal A}^{\Gamma}$
 has this property. As permutation DFAs are complete by definition,
 this implies that the tight bound stated in Theorem~\ref{thm:upper_bound}
 remains the same if we would additionally demand the resulting DFA for the projection
 to be complete. 
\end{remark}

We used an alphabet of size seven to match the bound. So, the question arises
if we can reach the bound using a smaller alphabet. I do not know the answer yet, but 
by using a result from~\cite[Theorem 6]{DBLP:journals/tcs/JiraskovaM12}
that every projection onto a unary language needs less than $\exp((1 + o(1))\sqrt{n \ln(n)})$
states, we can deduce that we need at least a ternary alphabet to reach the bound
stated in Theorem~\ref{thm:upper_bound}.
For the bound stated in Theorem~\ref{thm:upper_bound} is lowest possible, apart from the trivial
case $m = 0$, 
if $m = n$. Then, the bound is $2^{\lfloor n / 2 \rfloor} - 1$.
However, asymptotically, this grows way faster than $\exp((1 + o(1))\sqrt{n \ln(n)})$.
In fact, the ratio of both expressions could be arbitrarily large.

\begin{proposition}
 Each permutation automaton $\mathcal A$ such that $\pi_{\Gamma}(L(\mathcal A))$, for a non-empty and proper subalphabet $\Gamma \subseteq \Sigma$,
 attains the bound
 stated in Theorem~\ref{thm:upper_bound} with $m > 0$ must be over an alphabet with at least three letters
 and $|\Gamma| \ge 2$.
\end{proposition}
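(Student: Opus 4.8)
The plan is to split the two conclusions and notice that one is free once the other is in hand. Since $\Gamma$ is assumed to be a \emph{proper} subalphabet, $\Delta = \Sigma \setminus \Gamma$ is non-empty, so $|\Sigma| = |\Gamma| + |\Delta| \ge |\Gamma| + 1$. Hence $|\Sigma| \ge 3$ follows immediately from $|\Gamma| \ge 2$, and the whole statement reduces to showing $|\Gamma| \ge 2$. As $\Gamma$ is also non-empty we have $|\Gamma| \ge 1$, so the only thing to rule out is $|\Gamma| = 1$.

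First I would observe that if $|\Gamma| = 1$, then $\pi_{\Gamma}(L(\mathcal A))$ is a unary language, and invoke the bound of \cite[Theorem 6]{DBLP:journals/tcs/JiraskovaM12}: the projection of an $n$-state automaton onto a unary alphabet is recognizable with fewer than $\exp\bigl((1+o(1))\sqrt{n \ln n}\bigr)$ states. By hypothesis, however, the number of states attained equals exactly $2^{n - \lceil m/2 \rceil} - 1$ with $m > 0$. The key step is then a growth comparison after lower-bounding the attained value independently of the precise $m$: since $m$ counts states incident to non-loop unobservable transitions we always have $m \le |Q| = n$, whence $n - \lceil m/2 \rceil \ge n - \lceil n/2 \rceil = \lfloor n/2 \rfloor$ and therefore $2^{n - \lceil m/2 \rceil} - 1 \ge 2^{\lfloor n/2 \rfloor} - 1$. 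Because $\lfloor n/2 \rfloor$ grows linearly in $n$ while $\sqrt{n \ln n}$ does not, the quantity $2^{\lfloor n/2 \rfloor} - 1$ eventually exceeds $\exp\bigl((1+o(1))\sqrt{n \ln n}\bigr)$ (indeed their ratio tends to infinity). This contradicts the unary upper bound, so $|\Gamma| = 1$ is impossible, giving $|\Gamma| \ge 2$ and hence $|\Sigma| \ge 3$.

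The step I expect to be most delicate is that this comparison is inherently asymptotic: the unary upper bound only overtakes $2^{\lfloor n/2 \rfloor}$ once $n$ is large, so the exclusion of $|\Gamma| = 1$ is really established for the regime in which the attained bound is genuinely exponential. To make the argument more self-contained one can bound the reachable states of $\mathcal R_{\mathcal A}^{\Gamma}$ directly when $\Gamma = \{x\}$: the reachable subsets then form a single $\mu_x$-trajectory whose set-sizes are non-decreasing, since $\delta_x$ is a bijection and $\operatorname{Orb}_{\Delta}$ only enlarges sets by Lemma~\ref{lem:orbits-partition}; along each maximal stretch of constant size $\mu_x$ acts as the permutation $\delta_x$ on unions of $\Delta$-orbits, so that stretch has length at most the order of $\delta_x$. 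This yields a ceiling of roughly $n \cdot g(n)$ reachable states, where $g(n)$ is Landau's function (the maximal order of a permutation on $n$ points, with $\ln g(n) \sim \sqrt{n \ln n}$), recovering the same conclusion and clarifying that the true barrier for a unary $\Gamma$ is Landau-type $\exp(\sqrt{n \ln n})$ growth rather than the $2^{\Theta(n)}$ growth of the bound in Theorem~\ref{thm:upper_bound}.
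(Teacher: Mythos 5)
Your proof follows essentially the same route as the paper: the paper likewise reduces to ruling out $|\Gamma| = 1$ by invoking \cite[Theorem 6]{DBLP:journals/tcs/JiraskovaM12} for unary projections and comparing $\exp\bigl((1+o(1))\sqrt{n \ln n}\bigr)$ against the smallest possible value of the bound, $2^{\lfloor n/2 \rfloor} - 1$ (attained at $m = n$), noting the latter grows asymptotically faster. Your observation about the asymptotic delicacy and the Landau-function refinement is a thoughtful addition, but the core argument is identical to the paper's.
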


\section{State-Partition Automata and Normal Subgroups}
\label{sec:normal_subgroups}

% sufficeint and necessary initially connected, sagen keine einschränkung
First, we derive a sufficient condition for a permutation automaton
to be a state-partition automaton for a projection. Then, we introduce normal subgroups and show that if the letters
generate a normal subgroup, this condition is fulfilled.

\begin{propositionrep} % bilden blocksytem, dann auch alle gleich groß
\label{prop:orbits_permuted}
 Let $\mathcal A = (Q, \Sigma, \delta, q_0, F)$
 be a permutation automaton and $\Gamma \subseteq \Sigma$. Set $\Delta = \Sigma \setminus \Gamma$.
 Then, $\mathcal A$ is a state-partition automaton for $\pi_{\Gamma}$
 if the $\Delta$-orbits of the form $\Orb_{\Delta}(q)$
 are permuted, i.e., for each $x \in \Sigma$ and $q \in Q$, we have
 $
  \delta(\Orb_{\Delta}(q), x) = \Orb_{\Delta}(\delta(q,x)).
 $
%  Conversely, if $\mathcal A$ is initially connected and a state-partition automaton,
%  then the $\Delta$-orbits are permuted.
\end{propositionrep}
\begin{proof}
 First, set $\Delta = \Sigma \setminus \Gamma$.
 Let $\mathcal R_{\mathcal A}^{\Gamma} = (\mathcal P(Q), \Gamma, \mu, \Orb_{\Delta}(q_0), E)$ be the projection automaton.
 By Lemma~\ref{lem:orbits_for_normal_subgroup},
 from $\Orb_{\Delta}(q_0)$
 only subsets of the form $\Orb_{\Delta}(q)$
 for $q \in Q$, are reachable.
 So, with Lemma~\ref{lem:orbits-partition},
 $\mathcal A$ is a state-partition automaton.
 Note that we have at most $|Q|$ many $\Delta$-orbits, where
 the case that we have precisely $|Q|$ such orbits is only possible
 when, for every letter $x \in \Delta$, the transformation $\delta_x : Q \to Q$
 is the identity transformation.~\qed 
%  Conversely, assume the orbits are not permuted. The start 
%  state of $\mathcal R_{\mathcal A}^{\Gamma}$ is $\Orb_{\Delta}(q_0)$.
%  If the $\Delta$-orbits are not permuted, by Lemma~\ref{lem:orbits-partition},
%  there exists a word $w \in \Gamma^*$
%  such that $\mu(\Orb_{\Delta}(q_0), w) = \Orb_{\Delta}(S)$
%  for some $S \subseteq Q$ with $|S| \ge 2$.
%  Furthermore, we can show inductively that $|\mu(\Orb_{\Delta}(q_0) w)| \ge |\Orb_{\Delta}(q_0)|$
%  and, more specifically, $\delta(\Orb_{\Delta}(q_0), w) \subseteq \mu(\Orb_{\Delta}(q_0), w)$.
 
%  By Lemma~\ref{lem:inverses}, we can deduce that there exists $w' \in \Gamma^*$
%  such that $\Orb_{\Delta}(q_0) \subseteq \mu(\Orb_{\Delta}(S), w')$.
%  However
\end{proof}

 With Lemma~\ref{lem:orbits-partition}, if the orbits for some $\Delta \subseteq \Sigma$
 are permuted, then, for each $q \in Q$ and $x \in \Sigma$, 
 $
 \delta(\Orb_{\Delta}(q), x) = \Orb_{\Delta}(q)
 $ or
 $
 \delta(\Orb_{\Delta}(q), x) \cap \Orb_{\Delta}(q) = \emptyset.
 $
 %Note that we can always leave out states that are not reachable from the start state.
 %Hence, the assumption that the automaton is initially connected is no restriction.

 \begin{remark}
  The following example shows that $\mathcal A$
  being a state-partition automaton for $\Gamma$ does not imply
  that the sets $\Orb_{\Delta}(q)$ are permuted.
  Let $\mathcal A = (\{1,2,3,4,5,6,7,8\}, \{a,b\}, \delta, 1, \{1\})$
  with the transitions $a = (1,2,3,4)(5,6)(7,8)$ and $b = (1,5)(2,6)(3,7)(4,8)$.
  Then, for $\Gamma = \{b\}$ the automaton is a state-partition automaton, as
  the reachable states in $\mathcal R_{\mathcal A}^{\Gamma}$
  are $\{1,2,3,4\}$ and $\{5,6,7,8\}$, but the $\{a\}$-orbits
  are $\{1,2,3,4\}, \{5,6\}$ and $\{7,8\}$.
 \end{remark}

 Recall that $\mathcal T_{\mathcal A}$ denotes the transformation semigroup of $\mathcal A$.
 A subgroup of $\mathcal T_{\mathcal A}$, if $\mathcal A$ is a finite permutation automaton,
is a subset containing the identity transformation and closed under function composition.
 As we are only concerned with finite automata, this also implies closure under inverse functions.

Next, we show that when the symbols deleted by a projection
generate a normal subgroup, then the automaton is a state-partition automaton
for this projection. 
% However, we will also give an example that shows that not every state-partition
% permutation automaton for a projection $\pi_{\Gamma}$
% has the property that the letters in $\Sigma \setminus \Gamma$
% generate a normal subgroup.

% For an abstract group $G$, a subgroup $N$ is called a \emph{normal subgroup},
% if for any $g \in G$ we have $gN = Ng$.\todo{komplexprodukt einführen}
% If $G \subseteq \mathcal S_n$, then $N \subseteq G$
% is called a \emph{normal subgroup}, if it is a subgroup
% and for any $g \in G$, we have $g^{-1} N g \subseteq N$.
% In the theory of groups, normal subgroups are ubiquitous. % referenzen
% For a normal subgroup, the orbits of this subgroup
% form a system of blocks. % referenzne
% We restate this fact here with our notions. %in our language.
Normal subgroups are ubiquitous~\cite{cameron_1999,Rotman95} in abstract group theory
as well as in permutation group theory. We give a definition
for subgroups of $\mathcal T_{\mathcal A}$, when $\mathcal A$ is a permutation automaton, using our notation. 
We refer to more specialized literature for other definitions
and more motivation~\cite{cameron_1999,Rotman95}.
%only here.

\begin{definition}\label{def:normal_subgroup_T_A}
 Let $\mathcal A = (Q,\Sigma, \delta, q_0, F)$ be a permutation automaton.
 Then, a subgroup $N$ of $\mathcal T_{\mathcal A}$
%  is called \emph{normal}, if, for any $u,w \in \Sigma^*$, the following holds true:
%  \[
%   \left( \exists \delta_v \in N \ \forall q, q' \in Q : 
%   q' = \delta(\delta(q, v), u) \right)
%   \Leftrightarrow 
%   \left( \exists \delta_w \in N \  \forall q, q' \in Q :
%   q' = \delta(\delta(q, u), w) \right).
%  \]
 is called \emph{normal}, if, for each %\footnote{By writing $\exists \delta_{w} \in N$, we also suppose existence of an appropriate word $w \in \Sigma^*$.}
 $\delta_u, \delta_v \in \mathcal T_{\mathcal A}$ ($u,v \in \Sigma^*)$,
 \[
  ( \exists \delta_{w} \in N : \delta_u = \delta_{w v} )
  \Leftrightarrow 
  ( \exists \delta_{w'} \in N : \delta_u = \delta_{v w'} ).
 \]

\end{definition} %sieht man schon "schwächeres" kommutieren.

% verallgemeinrt fall m = 0 oben, remark

If a set of letters generates
a normal subgroup, then the orbits of these letters
are permuted by the other letters. As they are invariant
under the letters themselves that generate these orbits, 
every word over~$\Sigma$ permutes these orbits. This is
the statement of the next lemma.

\begin{lemmarep}
\label{lem:orbits_for_normal_subgroup}
 Let $\mathcal A = (Q, \Sigma, \delta, q_0, F)$ % todo notation, auch oben
 be a permutation automaton and $\Sigma' \subseteq \Sigma$ be
 such that $N = \{ \delta_u : Q \to Q \mid u \in \Sigma'^* \}$
 is a normal subgroup of $\mathcal T_{\mathcal A}$.
 Then, for each $x \in \Sigma$ and $q \in Q$, we have
 $
  \delta(\Orb_{\Sigma'}(q), x) = \Orb_{\Sigma'}(\delta(q,x)).
 $
\end{lemmarep}
\begin{proof}
 Let $q \in Q$ and $x \in \Sigma$.
 If $u \in \Sigma'^*$
 and we consider $\delta_{ux}$, then, by the normality condition,
 there exists $u' \in \Sigma'^*$
 such that $\delta_{ux} = \delta_{xu'}$
 and vice versa.
 Hence, for each $q' \in Q$, applying this, 
 we have
 \begin{equation}\label{eqn:normality_in_T_A}
  \exists u \in \Sigma'^* : q' = \delta(\delta(q, u), x) 
  \Leftrightarrow 
  \exists u' \in \Sigma'^* : q' = \delta(\delta(q, x), u').
 \end{equation}
 So, using Equation~\eqref{eqn:normality_in_T_A},
 \begin{align*}
     q' \in \delta(\Orb_{\Sigma'}(q), x) 
     & \Leftrightarrow \exists u \in \Sigma'^*  : q' = \delta(q, ux) \\ 
     & \Leftrightarrow \exists u' \in \Sigma'^* : q' = \delta(q, xu') \\
     & \Leftrightarrow q' \in \Orb_{\Sigma'}(\delta(q, x)).
 \end{align*}
 Hence,  the transition function $\delta$ maps
 a $\Sigma'$-orbit to another or the same $\Sigma'$-orbit.
 By Lemma~\ref{lem:orbits-partition},
 the $\Sigma'$-orbits partition $Q$.
 So, if $\delta(\Orb_{\Sigma'}(q), x) \cap \Orb_{\Sigma'}(q) \ne \emptyset$,
 we must have $\delta(\Orb_{\Sigma'}(q), x) = \Orb_{\Sigma'}(q)$.
 This finishes the proof.~\qed
\end{proof}

% Next, we show that if the deleted letters generate a normal subgroup
% for a projection, then the  automaton is
% a state-partition automaton.

So, combining Proposition~\ref{prop:orbits_permuted}
and Lemma~\ref{lem:orbits_for_normal_subgroup}.

\begin{theorem} % state-set partition fall
\label{thm:orbits-normal-subgroup}
 Let $\Gamma \subseteq \Sigma$, $\Delta = \Sigma \setminus \Gamma$
 and $\mathcal A = (Q, \Sigma, \delta, q_0, F)$
 be a permutation automaton.
 Set $N = \{ \delta_{u} : Q \to Q \mid u \in \Delta^* \}$,
 the subgroup in $\mathcal T_{\mathcal A}$ generated by~$\Delta$.
 If $N$ is normal in $\mathcal T_{\mathcal A}$, then $\mathcal A$ is a state-partition automaton
 for $\pi_{\Gamma}$. Hence, in this case, $\pi_{\Gamma}(L(\mathcal A))$
 is recognizable by an automaton with at most $|Q|$
 states. %spezialfall von vorher, komm grp spezialfall davon
\end{theorem}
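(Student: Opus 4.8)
The plan is to treat this theorem as a direct assembly of the two preceding results, so that the only real work is checking that the conclusion of one slots exactly into the hypothesis of the other. First I would instantiate Lemma~\ref{lem:orbits_for_normal_subgroup} with $\Sigma' = \Delta$. Since by assumption $N = \{ \delta_u : Q \to Q \mid u \in \Delta^* \}$ is normal in $\mathcal T_{\mathcal A}$, that lemma immediately yields, for every $x \in \Sigma$ and $q \in Q$, the identity $\delta(\Orb_{\Delta}(q), x) = \Orb_{\Delta}(\delta(q,x))$. This is precisely the statement that the $\Delta$-orbits of the form $\Orb_{\Delta}(q)$ are permuted by the action of each letter.

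Next I would feed this identity into Proposition~\ref{prop:orbits_permuted}, whose sufficient condition for $\mathcal A$ to be a state-partition automaton for $\pi_{\Gamma}$ is exactly the permutation identity just established. Applying the proposition therefore lets me conclude that $\mathcal A$ is a state-partition automaton for $\pi_{\Gamma}$.

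Finally, for the quantitative bound I would invoke Lemma~\ref{lem:orbits-partition}: the sets $\Orb_{\Delta}(q)$, $q \in Q$, partition $Q$, so there are at most $|Q|$ of them. Because being a state-partition automaton means the reachable states of $\mathcal R_{\mathcal A}^{\Gamma}$ form a partition of $Q$, and (as in the proof of Proposition~\ref{prop:orbits_permuted}) these reachable states are exactly $\Delta$-orbits, the projection automaton has at most $|Q|$ reachable states. Discarding the non-reachable subsets then gives a DFA recognizing $\pi_{\Gamma}(L(\mathcal A))$ with at most $|Q|$ states.

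I do not expect a genuine obstacle here: all of the substance lives in Lemma~\ref{lem:orbits_for_normal_subgroup}, where the normality condition of Definition~\ref{def:normal_subgroup_T_A} is used to commute a word $u \in \Delta^*$ past a single letter $x$ (rewriting $\delta_{ux}$ as $\delta_{xu'}$ with $u' \in \Delta^*$), and in Proposition~\ref{prop:orbits_permuted}. The combination itself is purely formal. The one point that warrants care is that the normal subgroup in the hypothesis is generated precisely by $\Delta$, so $\Sigma' = \Delta$ is the correct instantiation and the $\Sigma'$-orbits of the lemma coincide with the $\Delta$-orbits appearing in the proposition and in the state count.
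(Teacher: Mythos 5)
Your proposal matches the paper's own proof exactly: the paper derives Theorem~\ref{thm:orbits-normal-subgroup} by the same two-step combination, applying Lemma~\ref{lem:orbits_for_normal_subgroup} with $\Sigma' = \Delta$ to get the orbit-permutation identity and then invoking Proposition~\ref{prop:orbits_permuted} (whose proof also contains the $|Q|$-orbit count) for the state-partition conclusion and the bound. The argument is correct and complete.
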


\section{Commuting Letters} %Commutative Languages}

Let $\mathcal A = (Q, \Sigma, \delta, q_0, F)$ be a DFA.
We say that two letters $a,b \in \Sigma$ \emph{commute (in~$\mathcal A)$}, if
$\delta(q, ab) = \delta(q, ba)$ for each $q \in Q$. Hence, an automaton $\mathcal A$
is commutative precisely if all letters commute pairwise.

Here, we investigate commuting letters with respect to the projection operation.
Our first lemma states that if we can partition the alphabet of an $n$-state DFA
into two subalphabets
of letters such that each letter in the first set commutes with each letter in the second set,
then for a projection onto one subalphabet,
the projected language is recognizable by an $n$-state automaton.
By this result, the projected language of every $n$-state commutative automaton
is recognizable by an $n$-state automaton.
%Then, we investigate 
%commutative state-partition automata. 
%Any commutative % auch mit normalteiler
%permutation automaton is a state-partition automaton, but we also
We construct commutative automata that are not state-partition automata.
Hence, we have new examples of automata whose projected languages are recognizable
by automata with no more states than the original automaton, but which are not state-partition
automata.
Lastly, by investigating the proofs, we can show, with not much more effort, that varieties
of commutative languages are closed under projections.

% shuffle erwähnen, in op perm aut paper offen gelassen

%\subsection{Projections of Commutative Languages}

% normalteiler verallge das für perm gewissermaßen
% commuting letters, andersrum auch, also auf {a} projezieren?
\begin{lemma} 
\label{lem:centralizer}
 Suppose $\mathcal A = (Q, \Sigma, \delta, q_0, F)$
 is an arbitrary DFA.
 Let $\Gamma \subseteq \Sigma$ be such that, for each $a \in \Sigma\setminus\Gamma$, $b \in \Gamma$ and $q \in Q$,
 we have $\delta(q, ab) = \delta(q, ba)$.
 % wenn stark zusammenhängend, dann a permutation. im text erwähnen,d a nicht benötigt bzw auf ito verweisen
 % dann state partition
 Then, $\pi_{\Gamma}(L)$ is recognizable by a DFA
 with at most $|Q|$ states.
\end{lemma}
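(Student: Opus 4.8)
The plan is to avoid the power-set construction $\mathcal R_{\mathcal A}^{\Gamma}$ entirely and instead build a recognizing DFA directly on the state set $Q$. The whole proof rests on a single \emph{rearrangement} observation: since every $a \in \Delta := \Sigma \setminus \Gamma$ commutes with every $b \in \Gamma$, for each $q \in Q$ and each $v \in \Sigma^*$ we have $\delta(q, v) = \delta(q, \pi_\Gamma(v)\,\pi_\Delta(v))$, i.e.\ we may push all the $\Gamma$-letters of $v$ to the front and all the $\Delta$-letters to the back without changing the reached state. First I would prove this by induction on the number of \emph{inversions} of $v$ (adjacent pairs $ab$ with $a \in \Delta$, $b \in \Gamma$): each such pair can be swapped to $ba$ using the hypothesis $\delta(\cdot, ab) = \delta(\cdot, ba)$ applied at the intermediate state, and this strictly decreases the inversion count while preserving the relative order inside each block, so the process terminates at $\pi_\Gamma(v)\,\pi_\Delta(v)$.

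Granting the rearrangement, membership in the projection becomes a purely one-sided condition. For $w \in \Gamma^*$ we have $w \in \pi_\Gamma(L)$ iff there is some $v \in L$ with $\pi_\Gamma(v) = w$; writing $u = \pi_\Delta(v) \in \Delta^*$ and applying the rearrangement gives $\delta(q_0, v) = \delta(q_0, wu)$. Conversely, for any $u \in \Delta^*$ the word $wu$ projects to $w$. Hence $w \in \pi_\Gamma(L)$ iff $\exists u \in \Delta^* : \delta(q_0, wu) \in F$, which holds iff $\delta(q_0, w)$ is defined and lies in the set $F' := \{ q \in Q \mid \Orb_\Delta(q) \cap F \neq \emptyset \}$. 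I would then define $\mathcal B = (Q, \Gamma, \delta', q_0, F')$, where $\delta'$ is the restriction of $\delta$ to the letters of $\Gamma$, and read off $L(\mathcal B) = \pi_\Gamma(L)$ from the equivalence just established. Since $\mathcal B$ has exactly $|Q|$ states, the bound follows. The commutative case is then immediate, as the hypothesis holds for any split $\Sigma = \Gamma \cup \Delta$ when $\mathcal A$ is commutative.

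The only real care needed is bookkeeping for partial transition functions. I would use the paper's convention that $\delta(p,u) = \delta(q,u)$ also covers the case ``both undefined'', so that the adjacent swap $\delta(\cdot, ab) = \delta(\cdot, ba)$, and hence the rearrangement equality, remain correct even when $\delta(q_0, w)$ is undefined; in that case $\delta(q_0, wu)$ is undefined for every $u$, matching $w \notin \pi_\Gamma(L)$. The main obstacle is therefore not conceptual but notational --- making the inductive sorting argument watertight across interior substrings and across the undefined case --- rather than any genuine difficulty, and notably no appeal to the orbit-partition machinery (which requires a permutation automaton) is needed here.
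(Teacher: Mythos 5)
Your proposal is correct and is essentially the paper's own proof: you construct the same automaton (the restriction of $\delta$ to $\Gamma$ on the same state set $Q$, with accepting set $\{q \in Q \mid \exists u \in (\Sigma\setminus\Gamma)^* : \delta(q,u) \in F\}$, which coincides with your $F'$) and justify it by the same rearrangement identity $\delta(q_0, v) = \delta(q_0, \pi_{\Gamma}(v)\pi_{\Sigma\setminus\Gamma}(v))$, which the paper asserts by ``successively pushing'' the unobservable letters to the end and you merely make explicit via the inversion-counting induction and the handling of partiality.
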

\begin{proof}
 Intuitively, we take the input automaton and leave out all unobservable transitions
 and make a state accepting if, in the input automaton, we can go from this state
 to a final state by a word formed out of the deleted letters.

 Let $\mathcal B = (Q, \Gamma, \delta_{|\Gamma}, q_0, E)$
 be the DFA with $\delta_{|\Gamma}(q, x) = \delta(q,x)$,
 the same start state $q_0$
 and $E = \{ p \in Q \mid \exists q \in F \ \exists u \in (\Sigma \setminus \Gamma)^* : \delta(p, u) = q \}$.
 Then, $L(\mathcal B) = \pi_{\Gamma}(L(\mathcal A))$.

 If $\delta_{|\Gamma}(q_0, u) \in E$,
 then there exists $v \in (\Sigma\setminus\Gamma)^*$
 such that $\delta(q_0, uv) \in F$. So, $uv \in L(\mathcal A)$
 and $u = \pi_{\Gamma}(uv)$.

 Conversely, suppose $u = \pi_{\Gamma}(v)$ for some $v \in L(\mathcal A)$.
 By assumption, as we can successively push all letters in $\Sigma\setminus \Gamma$
 to the end, we have $\delta(q_0, v) = \delta(q_0, \pi_{\Gamma}(v)\pi_{\Sigma\setminus\Gamma}(v))$.
 So, $\delta(q_0, \pi_{\Gamma}(v)\pi_{\Sigma\setminus\Gamma}(v)) \in F$,
 which yields  $\delta_{|\Gamma}(q_0, \pi_{\Gamma}(v)) \in E$,
 hence $u \in L(\mathcal B)$.~\qed
\end{proof}

So, with Lemma~\ref{lem:centralizer}, we get the next result.

\begin{theoremrep}
\label{thm:proj_comm_lang}
 Let $\mathcal A = (Q, \Sigma, \delta, q_0, F)$ be a DFA
 such that $L(\mathcal A)$ is commutative.
 % todo scharf klar
 If $\Gamma \subseteq \Sigma$, then $\pi_{\Gamma}(L(\mathcal A))$
 is recognizable by a DFA with at most $|Q|$
 states.
\end{theoremrep}
\begin{proof} 
 First, note that, for each $q \in Q$, $a,b \in \Sigma$ and $u \in \Sigma^*$, we have
 \[
  \delta(q, abu) \in F \Leftrightarrow \delta(q, bau) \in F.
 \]
 Hence, those states are equivalent~\cite{HopUll79} 
 and if $\delta(q, ab) \ne \delta(q, ba)$, we can merge the states
 by identifying them and retaining only the transitions of one state,
 which still gives a recognizing automaton for $L(\mathcal A)$.
 
 Then, apply Lemma~\ref{lem:centralizer}.~\qed
\end{proof}

The definition of normality could be seen
as a generalization of commutativity.
Hence, with Theorem~\ref{thm:orbits-normal-subgroup},
we can deduce the next statement.

\begin{proposition}
\label{cor:perm_com_are_state_partition}
 Let $\mathcal A = (Q, \Sigma, \delta, q_0, F)$
 be a commutative permutation automaton and $\Gamma \subseteq \Sigma$.
 Then, $\mathcal A$ is a state-partition automaton for $\pi_{\Gamma}$.
\end{proposition}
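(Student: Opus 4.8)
The plan is to deduce the proposition directly from Theorem~\ref{thm:orbits-normal-subgroup}, so the entire task reduces to verifying the single hypothesis of that theorem for a commutative permutation automaton. Writing $\Delta = \Sigma \setminus \Gamma$ and $N = \{ \delta_u : Q \to Q \mid u \in \Delta^* \}$, the subgroup of $\mathcal{T}_{\mathcal{A}}$ generated by the deleted letters, I must show that $N$ is normal in $\mathcal{T}_{\mathcal{A}}$. Once this is established, Theorem~\ref{thm:orbits-normal-subgroup} immediately gives that $\mathcal{A}$ is a state-partition automaton for $\pi_{\Gamma}$ and that the projected language needs at most $|Q|$ states.

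The key step is to promote commutativity from the level of letters to the level of the whole transformation monoid. By hypothesis $\delta(q,ab) = \delta(q,ba)$ for all letters $a,b \in \Sigma$ and all $q \in Q$, which says $\delta_a \delta_b = \delta_b \delta_a$ for single letters. First I would argue by a straightforward induction on word length that this lifts to arbitrary words, i.e. $\delta_u \delta_v = \delta_v \delta_u$ for all $u,v \in \Sigma^*$, so that $\mathcal{T}_{\mathcal{A}}$ is abelian (and, since $\mathcal{A}$ is a permutation automaton, it is in fact an abelian group). Equivalently, $\delta_{uv} = \delta_{vu}$ for all words $u,v$.

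With commutativity of $\mathcal{T}_{\mathcal{A}}$ in hand, normality of $N$ is essentially automatic, which I would make explicit against Definition~\ref{def:normal_subgroup_T_A}. That definition requires, for all $u,v \in \Sigma^*$, the equivalence
\[
 ( \exists \delta_{w} \in N : \delta_u = \delta_{w v} )
 \Leftrightarrow
 ( \exists \delta_{w'} \in N : \delta_u = \delta_{v w'} ).
\]
But in the abelian setting $\delta_{wv} = \delta_{vw}$, so the two conditions coincide verbatim once one takes $w' = w$; hence the equivalence holds and $N$ is normal. This simply reflects the elementary group-theoretic fact that every subgroup of an abelian group is normal, recast in the paper's word-based notation.

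I expect no genuine obstacle here; the proof is short precisely because the heavy lifting was done in Theorem~\ref{thm:orbits-normal-subgroup} and Lemma~\ref{lem:orbits_for_normal_subgroup}. The only points that warrant care are (i) the inductive lifting of commutativity from letters to words, which is routine but should be stated so that $\delta_{wv} = \delta_{vw}$ is justified rather than assumed, and (ii) checking that the purely syntactic normality condition of Definition~\ref{def:normal_subgroup_T_A} is indeed what is being verified, so that the appeal to Theorem~\ref{thm:orbits-normal-subgroup} is legitimate. Since $\Gamma \subseteq \Sigma$ is arbitrary and $\Delta$ is its complement, the conclusion holds for every projection, as claimed.
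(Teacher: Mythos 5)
Your proposal is correct and takes essentially the same route as the paper: the paper deduces this proposition directly from Theorem~\ref{thm:orbits-normal-subgroup} via the observation that normality generalizes commutativity, i.e., that in a commutative permutation automaton the subgroup $N$ generated by $\Delta = \Sigma \setminus \Gamma$ is automatically normal in the sense of Definition~\ref{def:normal_subgroup_T_A}. Your write-up simply makes explicit the routine steps the paper leaves implicit (lifting commutativity from letters to words so that $\delta_{wv} = \delta_{vw}$, and checking the word-based normality condition with $w' = w$), which is exactly the intended argument.
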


However, there exist commutative automata that are not state-partition automata,
as shown by Example~\ref{ex:comm_proj_non_state_partition}.

\begin{example} % aber so ineinander enthalten, das formalisieren?
\label{ex:comm_proj_non_state_partition}
 Let $\mathcal A = (\{q_{\varepsilon}, q_a, q_b\}, \{a,b\}, \delta, q_{\varepsilon}, \{q_{\varepsilon}, q_b\})$
 with  
 \[
  \delta(q_x, y)  = \left\{ \begin{array}{ll}
   q_a & \mbox{if } x = \varepsilon, y = a; \\
   q_b & \mbox{if } x = \varepsilon, y = b; \\ 
   q_b & \mbox{if } x = a, y = b; \\
   q_x & \mbox{otherwise.}
   \end{array}
  \right.
 \]
 Then, $L(\mathcal A) = \{ u \in \{a,b\}^* \mid |u|_a = 0 \mbox{ or } |u|_b > 0 \}$.
 However, we see that $\Orb_{\{b\}}(q_{\varepsilon}) = \{ q_{\varepsilon}, q_b \}$,
 $\Orb_{\{b\}}(q_b) = \{q_b\}$
 and $\Orb_{\{b\}}(q_a) = \{q_a, q_b\}$. Hence,
 $\mathcal A$ is not a state-partition automaton for the projection 
 $\pi_{\{a\}} : \{a,b\}^* \to \{a\}^*$. Also,
 it is not a state-partition automaton for the projection
 onto $\{b\}^*$.
\end{example}

The proofs of Lemma~\ref{lem:centralizer} and Theorem~\ref{thm:proj_comm_lang}
also show that the projected language of a commutative permutation automaton
is recognizable by a permutation automaton, i.e., a group language.
On the other hand, in the general case, Example~\ref{ex:proj_perm_aut}
below gives a permutation automaton whose projected language is not a group language.
Also, most properties defined in terms of automata are preserved by projection
in the commutative case. For example the property of being aperiodic~\cite{Eilenberg1976,Pin86,DBLP:reference/hfl/Pin97}.
We give a more general statement next, showing that many classes from the literature~\cite{Eilenberg1976,Pin86,DBLP:reference/hfl/Pin97} %todo vielleicht handbook syntactic semigroups zitieren?
are closed under projection when restricted to commutative languages.

% variety kurz definieren
%
% mit kanonischer minimaler complete automat (nicht defineirt für partial erwäöhnen)

\begin{theoremrep} % todo vielleicht doch direkt nochmal hinschreiben.
 Let $\Sigma$ be an alphabet and $\Gamma \subseteq \Sigma$.
 Suppose $\mathcal V$ is a variety of commutative languages. If $L \in \mathcal V(\Sigma^*)$, then $\pi_{\Gamma}(L) \in \mathcal V(\Gamma^*)$.
 In particular, the variety of commutative languages is closed under projection.
\end{theoremrep}
\begin{proof}
 For varieties, the transformation monoid of the minimal complete automaton of a language $L \in \mathcal V(\Sigma^*)$
 is an element of the corresponding variety of monoids~\cite{Eilenberg1976,Pin86}.
 Let $\mathcal A$ be the minimal complete automaton of $L$.
 %The construction of Theorem~\ref{thm:proj_comm_lang}
 %also work if $\mathcal A$ is, if necessary, made complete by adding a sink state.
 %Then, $\mathcal B$ is also complete
 Then, we see that applying the construction
 of Lemma~\ref{lem:centralizer} also gives a complete automaton $\mathcal B$.
 The transformation monoid $\mathcal T_{\mathcal B}$
 is a submonoid of $\mathcal T_{\mathcal A}$
 by interpreting any $\delta_u \in \mathcal T_{\mathcal B}$, $u \in \Gamma^*$,
 as an element $\delta_u \in \mathcal T_{\mathcal A}$ with $u \in \Sigma^*$.
 By standard constructions~\cite{Eilenberg1976,Pin86}, %todo explicit machen?
 a transformation monoid recognizes a language if and only if the
 corresponding automaton recognizes the language.
 Hence, by Theorem~\ref{thm:proj_comm_lang},
 the transformation monoid recognizes $\pi_{\Gamma}(L)$.
 As varieties of monoids are closed under submonoids~\cite{Eilenberg1976,Pin86}, 
 we have $\mathcal T_{\mathcal B} \in \mathcal V$ %todo notation variety of monoids?
 and so $\pi_{\Gamma}(L) \in \mathcal V(\Gamma^*)$.~\qed
\end{proof}

Hence, for example commutative locally-testable, piecewise-testable, star-free or group
languages are preserved under every projection operator, as these classes
form varieties~\cite{Pin86}.

% beispiele für nicht star-free und grp im allgemein fall

\begin{remark}
 In~\cite{DBLP:journals/tcs/JiraskovaM12} it was stated that 
 languages satisfying the observer property, i.e.,
 that are given by a state-partition automaton for a given projection operator, 
 and the finite languages projected onto unary finite languages
 were the only known languages for which we can recognize the projected language
 with at most the number of states as the original language.
 Note that Theorem~\ref{thm:proj_comm_lang}
 provides genuinely new instances for which this holds true,
 see Example~\ref{ex:comm_proj_non_state_partition}.
%  However, Theorem~\ref{thm:char_com_state_partition} implies
%  that these new instances are automata such that the orbits 
%  for the deleted letters are not strongly connected,
%  which implies that they are no permutation automata, as for 
%  permutation automata every such orbit is strongly connected.
 % todo, beispiel
 % ohne triviale loops?
\end{remark}

\begin{example}
\label{ex:proj_perm_aut}
Also, for projections, consider the group language given by the permutation
automaton
$\mathcal A = (\{a,b\}, \{0,1,2\}, \delta, 0, \{2\})$
with $a = (0,1)$ and $b = (0,1,2)$.
Then, $\pi_{\{b\}}(L(\mathcal A)) = bb^*$, which is not a group language. 
For example, $b$ is the projection of $ab \in L(\mathcal A)$,
or $bbb$ the projection of $abbab \in L(\mathcal A)$.
\end{example}

% \subsection{The Shuffle of Two Commutative Group Languages}
% \label{subsec:it_shuffle}

% % und shuffle offen gelassen in operationas on perm aut paper
% % todo

% \input{sc_shuffle_grp_lang}

\section{Conclusion}

We have continued the investigation of the state complexity of operations
on permutation automata, initiated in~\cite{DBLP:conf/dlt/HospodarM20},
and the investigation of the projection operation~\cite{DBLP:journals/tcs/JiraskovaM12,Wong98}.
% offene fragen? magic number? algorithmisches?
We improved the general bound to the tight bound $2^{n - \lceil \frac{m}{2} \rceil} - 1$
in this case.
Note that the general bound $2^{n-1} + 2^{n-m} - 1$ for the projection
is only achieved for automata with precisely $m - 1$ non-loop unobservable transitions~\cite{DBLP:journals/tcs/JiraskovaM12}.
However, if we have more such unobservable transitions, then
it was also shown in~\cite{DBLP:journals/tcs/JiraskovaM12}
that we have the better tight bound $2^{n - 2} + 2^{n-3} + 2^{n - m} - 1$.
For our lower bound stated in Theorem~\ref{thm:lower_bound}
in the case of permutation automata, we have precisely
the same number of non-loop unobservable transitions as states incident
with them. 
%More generally, for permutation automata, as every letter induces cycles of the states,
%we always have more nonloop unobservable transitions than states incident to them.
% graph, good & bad sets angucken
Lastly, note that the condition in Proposition~\ref{prop:orbits_permuted}
could be easily checked. Likewise, checking if a subset of letters $\Delta$ generate a normal subgroup
could also be checked efficiently using results from~\cite{DBLP:conf/stoc/BabaiLS87}.
%We mention
%that we only have to check membership of $xyx'$ and $x'yx$ for each $x \in \Sigma$ and $y \in \Delta$,
%where $x'$ denotes the inverse element given by Lemma~\ref{lem:inverses}, in the subgroup generated
%by $\Delta$. Membership in permutation groups could be done efficiently~\cite{DBLP:conf/stoc/BabaiLS87}
%and the expression are also easily computable.

\smallskip \noindent \footnotesize
\textbf{Acknowledgement.} I sincerely thank the anonymous reviewers for careful reading and detailed feedback
that helped me in finding better formulations or fixing typos. Also, 
Section~\ref{sec:normal_subgroups} was restructured after this feedback and the cycle notation
was pointed out to me by one reviewer.

\bibliographystyle{splncs04}
\bibliography{ms} 

\end{document}